\newtheorem{theorem}{\textbf{Theorem}}
\newtheorem{lemma}{\textbf{Lemma}}
\newtheorem{definition}{\textbf{Definition}}
\newtheorem{proposition}{\textbf{Proposition}}
\begin{document}

\title{Least Cost Influence Maximization Across Multiple Social Networks}

\author{Huiyuan~Zhang,
		Dung~T.~Nguyen,
        Soham~Das,
        Huiling~Zhang
        and~My~T.~Thai\\
\IEEEauthorblockA{Department of Computer and Information Science and Engineering
\\University of Florida, Gainesville, Florida 32611\\
Email: \{huiyuan, dtnguyen, sdas, huiling, mythai\}@cise.ufl.edu}
\thanks{This article has been published in IEEE/ACM Transactions on Networking, 24(2), 929-939, March 12, 2015, DOI: 10.1109/TNET.2015.2394793. Copyright (c) 2015 IEEE. Personal use of this material is permitted. Permission from IEEE must be obtained for all other uses, in any current or future media, including reprinting/republishing this material for advertising or promotional purposes, creating new collective works, for resale or redistribution to servers or lists, or reuse of any copyrighted component of this work in other works.
H. Zhang and D. T. Dung are co-first authors of this work.}}

\maketitle

\begin{abstract}
Recently in Online Social Networks (OSNs), the \textit{Least Cost Influence} (LCI) problem has become one of the central research topics. It aims at identifying a minimum number of seed users who can trigger a wide cascade of information propagation. Most of existing literature investigated the LCI problem only based on an individual network. However, nowadays users often join several OSNs such that information could be spread across different networks simultaneously. Therefore, in order to obtain the best set of seed users, it is crucial to consider the role of overlapping users under this circumstances.

In this article, we propose a unified framework to represent and analyze the influence diffusion in multiplex networks. More specifically, we tackle the LCI problem by mapping a set of networks into a single one via lossless and lossy coupling schemes. The lossless coupling scheme preserves all properties of original networks to achieve high quality solutions, while the lossy coupling scheme offers an attractive alternative when the running time and memory consumption are of primary concern. Various experiments conducted on both real and synthesized datasets have validated the effectiveness of the coupling schemes, which also provide some interesting insights into the process of influence propagation in multiplex networks.

\end{abstract}

\begin{IEEEkeywords}
Coupling, multiple networks, influence propagation, online social networks
\end{IEEEkeywords}

\section{Introduction}
In the recent decade, the popularity of online social networks, such as Facebook, Google+, Myspace and Twitter etc., has created a new major communication medium and formed a promising landscape for information sharing and discovery. On average \cite{osnstatistics}, Facebook users spend 7 hours and 45 minutes per person per month on interacting with their friends ; 3.2 billion likes and comments are posted every day on Facebook; 340 million tweets are sent out everyday on Twitter. Such engagement of online users fertilizes the land for information propagation to a degree which has never been achieved before in the mass media. More importantly, OSNs also inherit one of the major properties of real social networks -- the word-of-mouth effect, in which personal opinion or decision can be reshaped or reformed through influence from friends and colleagues. Recently, motivated by the significant effect of viral marketing, OSNs have been the most attractive platforms to increase brand awareness of new products as well as strengthen the relationship between customers and companies. In general, the ultimate goal is to find the least advertising cost set of users which can trigger a massive influence.

Along with the fast development of all existing OSNs, there have been quite a number of users who maintain several accounts simultaneously, which allow them to propagate information across different networks. For example, Jack, a user of both Twitter and Facebook, knew a new book from Twitter. After reading it, he found it very interesting and shared this book with friends in Facebook as well as Twitter. This can be done by configuring both of the accounts to allow automatically posting across different social networks. As a consequence, the product information is exposed to his friends and further spreads out on both networks. If we only focus on an individual network, the spread of the information is estimated inaccurately. As shown in Fig. \ref{fig:infoPropagation}, the fraction of overlapping users is considerable. Therefore considering the influence only in one network fails to identify the most influential users, which motivates us to study the problem in multiplex networks where the influence of users is evaluated based on all OSNs in which they participate.

\begin{figure}
\centering
	\subfigure[Auto post from Facebook to Twitter] { 
		\centering
		\includegraphics[width=0.3\columnwidth]{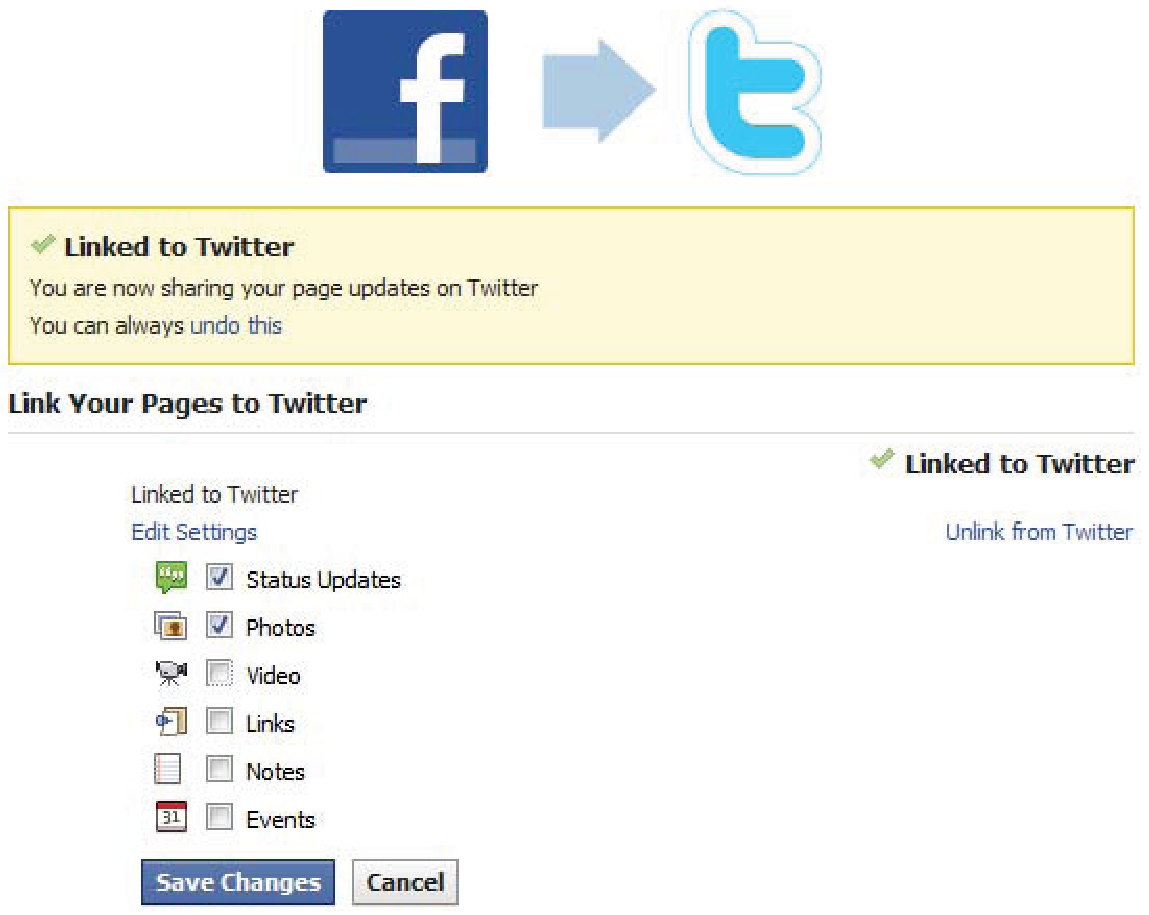}
		\label{fig:facebook2twitter}
	}
	\subfigure[Auto post from Twitter to Facebook]{
		\centering
		\includegraphics[width=0.3\columnwidth]{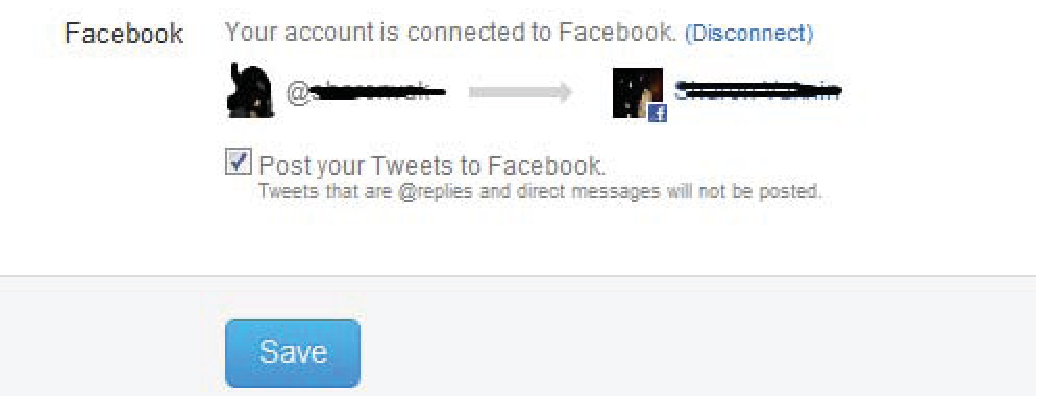}
		\label{fig:twitter2facebook}
	}\\
    \subfigure[The number of shared users between major OSNs in 2009 \cite{Anderson2009}]{
		\centering
		\includegraphics[width=0.3\columnwidth]{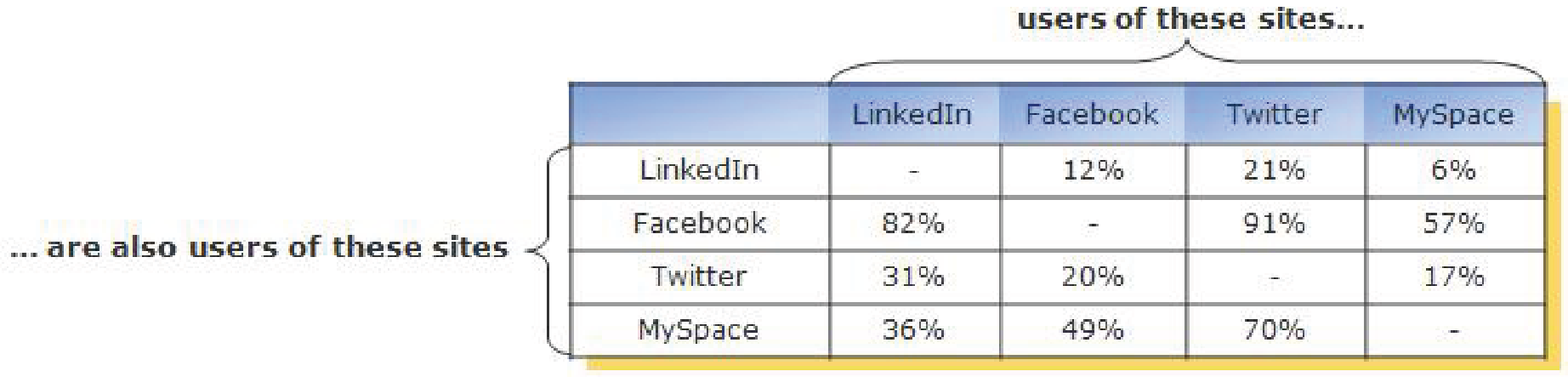}
		\label{fig:twitter2facebook}
    }
\caption{Information propagation across social networks}%
\label{fig:infoPropagation}%
\end{figure}

\vspace{5pt}

\emph{Related works.} Nearly all the existing works studied different variants of the least cost influence problem on a single network. Kempe et al. \cite{Kempe2003} first formulated the influence maximization problem which asks to find a set of $k$ users who can maximize the influence. The influence is propagated based on a stochastic process called Independent Cascade Model (IC) in which a user will influence his friends with probability proportional to the strength of their friendship. The author proved that the problem is NP-hard and proposed a greedy algorithm with approximation ratio of $(1 - 1/e)$. After that, a considerable number of works studied and designed new algorithms for the problem variants on the same or extended models such as \cite{Chen2010,huiyuanIcdcs13,yilinCIKM2012,weiliWu}. There are also works on the linear threshold (LT) model for influence propagation in which a user will adopt the new product when the total influence of his friends surpass some threshold. Dinh et al \cite{Dinh2013ToN} proved the inapproximability as well as proposed efficient algorithms for this problem on a special case of LT model. In their model, the influence between users is uniform and a user is influenced if a certain fraction $\rho$ of his friends are active.

Recently, researchers have started to explore multiplex networks with works of Yagan et al. \cite{Yagan2012} and Liu et al. \cite{Liu2012} which studied the connection between offline and online networks. The first work investigated the outbreak of information using the SIR model on random networks. The second one analyzed networks formed by online interactions and offline events. The authors focused on understanding the flow of information and network clustering but not solving the least cost influence problem. Additionally, these works did not study any specific optimization problem of viral marketing. Shen et al. \cite{yilinCIKM2012} explored the information propagation in multiplex online social networks taking into account the interest and engagement of users. The authors combined all networks into one network by representing an overlapping user as a super node. This method cannot preserve the individual networks' properties.

In this article, we studies the LCI problem which aims at finding a set of users with minimum cardinality to influence a certain fraction of users in multiplex networks. Due to the complex diffusion process in multiplex networks, it is difficult to develop the solution by directly extending previous solutions in a single network. Additionally, studying the problem in multiplex networks introduces several new challenges: (1) how to accurately evaluate the influence of overlapping users; (2) in which network, a user is easier to be influenced; (3) which network propagates the influence better. To answer above questions, we first introduce a model representation to illustrate how information propagate in multiplex networks via coupling schemes. By mapping multiple networks into one network, different coupling schemes can preserve partial or full properties of the original networks. After that, we can exploit existing solutions on a single network to solve the problem in multiplex networks. Moreover, through comprehensive experiments, we have validated the effectiveness of the coupling schemes, and also provide some interesting insights into the process of influence propagation in multiplex networks. Our main contributions are summarized as follows:

\begin{itemize}
	\item Propose a model representation via various coupling schemes to reduce the problem in multiplex networks to an equivalent problem on a single network. The proposed coupling schemes can be applied for most popular diffusion models including: linear threshold model, stochastic threshold model and independent cascading model.
	\item Provide a scalable greedy algorithm to solve the LCI problem. Especially, the improvement factor scales up with the size of the network which allows the algorithm to run on very large networks with millions of nodes.
	\item Conduct extensive experiments on both real and synthesized datasets. The results show that considering multiplex networks instead of a single network can effectively choose the most influential users.
\end{itemize}

The rest of the paper is organized as follows. In Section~\ref{se:model}, we present the influence propagation model in multiplex networks and define the problem. The lossless and lossy coupling schemes are introduced in Section~\ref{se:coupling} and Section~\ref{se:lossy}. A scalable greedy algorithm is proposed in Section~\ref{se:algorithm}. Section~\ref{se:experiment} shows the experimental results on the performance of different algorithms and coupling schemes. Finally, Section~\ref{se:conclusion} concludes the paper.

\section{Model and problem definition} 
\label{se:model}
\subsection{Graph notations}
We consider $k$ networks $G^1, G^2, \ldots, G^k$, each of which is modeled as a weighted directed graph $G^i = (V^i, E^i, \theta^i, W^i)$.
The vertex set $V^i = \{u\text{'s}\}$ represents the participation of $n^i = |V^i|$ users in the network $G^i$, and the edge set $E^i = \{(u, v)\text{'s}\}$ contains $m^i = |E^i|$ oriented connections (e.g., friendships or relationships) among network users.
$W^i = \{w^i(u, v)\text{'s}\}$ is the (normalized) weight function associated to all edges in the $i^{th}$ network.
In our model, weight $w^i(u, v)$ can also interpreted as the strength of influence (or the strength of the relationship) a user $u$ has on another user $v$ in the $i^{th}$ network.
The sets of incoming and outgoing neighbors of vertex $u$ in network $G^i$ are denoted by $N^{i-}_{u}$ and $N^{i+}_{u}$, respectively.
In addition, each user $u$ is associated with a threshold $\theta^i(u)$ indicating the persistence of his opinions. The higher $\theta^i(u)$ is, the more unlikely that $u$ will be influenced by the opinions of his friends.
Furthermore, the users that actively participate in multiple networks are referred to as \emph{overlapping users} and can be identified using methods in \cite{iofciu2011identifying, buccafurri2012discovering} (Note that identifying overlapping users is not the focus of this paper).
Those users are considered as bridge users for information propagation across networks.
Finally, we denote by $G^{1 \ldots k}$ the system consisting of $k$ networks, and by $U$ the exhaustive set of all users $U = \cup_{i = 1}^k V^i$.

\subsection{Influence Propagation Model}
We first describe the \textit{Linear Threshold} (LT) model  \cite{Dinh2013ToN}, a popular model for studying information and influence diffusion in a single network, and then discuss how LT model can be extended to cope with multiplex networks.
In the classic LT model, each node $u$ can be either \textit{active} or \textit{inactive}: $u$ is in an \emph{active} state if it is selected into the seed set, or the total influence from the in-degree neighbors exceeds its threshold $\theta(u)$, i.e, $\sum_{v \in N(u)} w(v,u) \geq \theta(u)$. Otherwise, $u$ is in an \emph{inactive} state.

In multiplex network system, given a number of $k$ networks, the information is propagated separately in each network and can only flows to other networks via the overlapping users.
The information starts to spread out from a set of seed users $S$ i.e. all users in $S$ are active and the remaining users are inactive. At time $t$, a user $u$ becomes active if the total influence from its active neighbors surpasses its threshold in some network i.e. there exists $i$ such that: 
$$\sum_{v \in N^{i-}_{u}, v \in A } {w^i(v, u)} \geq \theta^i(u)$$ 
where $A$ is the set of active users after time ($t-1$).  

In each time step, some of inactive users become activated and try to influence other users in the next time step. The process terminates until no more inactive users can be activated. If we limit the propagation time to $d$, then the process will stop after $t = d$ time steps. The set of active users in time $d$ is denoted as $A^{d}(G^{1 \ldots k}, S)$. Note that $d$ is also the number of hops up to which the influence can be propagated from the seed set, so $d$ is called the number of propagation hops.

\subsection{Problem definition}
In this paper, we address the fundamental problem of viral marketing in multiplex networks: the \textbf{Least Cost Influence} problem. The problem asks to find a seed set of minimum cardinality which influences a large fraction of users. 

\begin{definition}(Least Cost Influence (LCI) Problem)
Given a system of $k$ networks $G^{1 \ldots k}$ with the set of users $U$, a positive integer $d$, and $0 < \beta \leq 1$, the LCI problem asks to find a seed set $S \subset U$ of minimum cardinality such that the number of active users after $d$ hops according to LT model is at least $\beta$ fraction of users i.e. $|A^d(G^{1 \ldots k}, S)| \geq \beta |U|$.  
\end{definition}

When $k = 1$, we have the variant of the problem on a single network which is NP-hard to solve \cite{Chen2008}, Dinh et al. \cite{Dinh2013ToN} proved the
inapproximability and proposed an algorithm for a special case when the influence between users is uniform and a user is activated 
if a certain fraction $\rho$ of his friends are active. In the following sections, we will present different coupling strategies to reduce the problem in multiplex networks to the problem in a single network in order to utilize the algorithm design.

\section{Coupling Scheme}
A coupling scheme is an approach to project multiple networks to a single network, which can preserve important network information and reproduce the diffusion process from each individual network. Such a scheme will facilitate researchers to study various optimization problems that relate to the diffusion of information on multiple networks. In general, we can mitigate these problems to the one defined on single network and apply existing solutions to solve them. Next we specify the requirements for such schemes and the general framework.

\subsection{Coupling scheme general framework}
Our goal is to map multiple networks into a single network such that a diffusion process on multiple networks can be simulated by a process on the projected network. Two most important points are: (1) which user is active and (2) when a user is activated. Formally, a coupling scheme that maps a system of networks $G^{1 \ldots k}$ with the set of users $U$ to a network $G=(V, E)$ needs to satisfy following requirements:   

\begin{itemize}
	\item[(1)] There exists a set of nodes $\mathcal{U} \subseteq V$ and bijection function that maps users to nodes in the coupled network: $\mathcal{F}: \mathcal{U} \rightarrow U$.
	\item[(2)] There exists a time mapping function $\mathcal{T}: \mathbb{N} \rightarrow \mathbb{N}$.
	\item[(3)] User $u \in U$ is activated at time $t$ on $G^{1 \ldots k}$ iff $\mathcal{F}(u)$ is activated at time $\mathcal{T}(t)$ on $G$. 
\end{itemize} 

The first constraint reserves the identity of users in the coupled network. The second constraint allows us the know when a user is activated. The last constraint guarantees that the diffusion process is preserved, i.e., the diffusion of information on the set of user $U$ is the same on the set of nodes $\mathcal{U}$. This is the core part of the couple scheme and may be difficult to achieve. Since the main goal is to construct a solution to the studied problem on multiple networks from the solution on single network, we can relax the last condition such as $u \in U$ is activated at time $t$ on $G^{1 \ldots k}$ if $\mathcal{F}(u)$ is activated at time $\mathcal{T}(t)$ on $G$. In this case, the diffusion information is not totally reserved. The coupling scheme is called \emph{lossless coupling scheme} if the last condition is satisfied and \emph{lossy coupling scheme} otherwise. 

Since our main concern is the diffusion of information among users, such coupling scheme reserve most of the properties of the diffusion process. It helps to answer following questions:
\begin{itemize}
	\item When a node becomes active?
	\item How many nodes are activated at a specific time?
	\item Who are top influencers in the multiple networks?
\end{itemize}

Another important aspect of the coupling scheme is the activation state of nodes in $V \setminus \mathcal{U}$. In some optimization problems, the fraction of active nodes plays an important role. Thus, it is desirable for the coupling scheme to reserve the fraction of active nodes or the scale-up property. 

\begin{definition}[Scale-up Property]
A coupling scheme is said to have scale-up property if there exists a constant $c = c(G^{1 \ldots k})$ such that there is $cK$ active nodes on $G$ iff there is $K$ active  users on $G^{1 \ldots k}$.
\end{definition}



\subsection{General framework to solve some optimization problems}
With the coupling scheme, if we only consider the set of users and its mapped set on the coupled network, the diffusion process is the same on these two sets. Thus, we can design algorithms to solve various information diffusion optimization problems on multiple networks such as Influence Maximization problem \cite{Kempe2003}, Limiting the misinformation problem  \cite{budak2011limiting}, Minimum Influential Node Selection problem  \cite{zou2009latency}, etc, by the following framework: (1) Create a coupled network following a coupling scheme, (2) Use an algorithm for the studied problem on single networks to identify the set of selected nodes, (3) Use the $\mathcal{F}$ function to determine the set of selected users from the set of selected nodes on the coupled network. 
\begin{algorithm}[h]
\caption{General Framework}
\label{alg:framework}
	\begin{algorithmic}
	\State \textbf{Input:} A set of users $U$, a system of networks $G^{1 \ldots k}$, and an algorithm $\mathcal{A}$.
	\State \textbf{Output:} A solution $S \subset U$
		\State $G \gets \textit{ The coupled network of } G^{1 \ldots k}$
		\State $C \gets \textit{ Set of selected users provided by } \mathcal{A} \textit{ on } G$
		\State $S \gets \mathcal{F}(C)$
		\State Return $I$
	\end{algorithmic}
\end{algorithm}

\section{Lossless coupling schemes}
\label{se:coupling}
In this section, we present the lossless scheme to couple multiple networks into a new single network with respect to the influence diffusion process on each participant network. A notable advantage of this newly coupled graph is that we can use any existing algorithm on a single network to produce the solution in multiplex networks with the same quality.

\subsection{Clique lossless coupling scheme}

In LT-model, the first issue is solved by introducing dummy nodes for each user $u$ in networks that it does not belong to. These dummy nodes are isolated. Now the vertex set $V^i$ of $i^{th}$ network can be represented by $V^i = \{u^i_1, u^i_2, \ldots, u^i_n\}$ where $U =  \{u_1, u_2, \ldots, u_n\}$ is the set of all users. $u^i_p$ is called the \emph{representative vertex} of $u_p$ in network $G^i$. In the new representation, there is an edge from $u^i_p$ to $u^i_q$ if $u_p$ and $u_q$ are connected in $G^i$. Now we can union all $k$ networks to form a new network $G$. The approach to overcome the second challenge is to allow nodes $u^1, u^2, \ldots, u^k$ of a user $u$ to influence each other e.g. adding edge $(u^i, u^j)$ with weight $\theta(u^j)$. When $u^i$ is influenced, $u^j$ is also influenced in the next time step as they are actually a single overlapping user $u$, thus the information is transferred from network $G^i$ to $G^j$. But an emerged problem is that the information is delayed when it is transferred between two networks. Right after being activated, $u^i$ will influence its neighbors while $u^j$ needs one more time step before it starts to influence its neighbors. It would be better if both $u^i$ and $u^j$ start to influence their neighbors in the same time. For this reason, new \emph{gateway vertex} $u^0$ is added to $G$ such that both $u^i$ and $u^j$ can only influence other vertices through $u^0$. In particular, all edges $(u^i, v^i)$ ($(u^j, z^j)$) will be replaced by edges $(u^0, v^i)$ ($(u^0, z^j)$). In addition, more edges are added between $u^0$, $u^i$, and $u^j$ to let them influence each other, since the connection between gateway and representative vertices of the same user forms a clique, so we call it clique lossless coupling scheme. After forming the topology of the coupled network, we assign edge weights and vertex thresholds as following:

\emph{Vertex thresholds}. All dummy vertices and gateway vertices have the threshold of 1. Any remaining representative vertex $u^i_p$ has the same threshold as $u_p$ in $G^i$, i.e., $\theta(u^i_p) = \theta^i(u_p)$. 

\emph{Edge weights}. If there is an edge between user $u$ and $v$ in $G^i$, then the edge $(u^0, v^i)$ has weight $w(u^0, v^i) = w^i(u, v)$. The edges between gateway and representative vertices of the same user $u$ are assigned as $w(u^i, u^j)$ $= \theta(u^j)$, $\forall~~0 \leq i, j \leq k, i \neq j$ to synchronize their state together. 

A simple example of the clique lossless coupling scheme is illustrated in Fig. \ref{fig:lossless_scheme_example}.  

\begin{figure}[h!]
\centering
\subfigure[An instance of multiplex networks with 4 users. Each user is represented by vertices of the same color with different thresholds in different networks e.g. green user has thresholds of 0.3 and  0.2 in $G^2$ and $G^3$.] { 
		\centering
		\includegraphics[width=0.55\columnwidth]{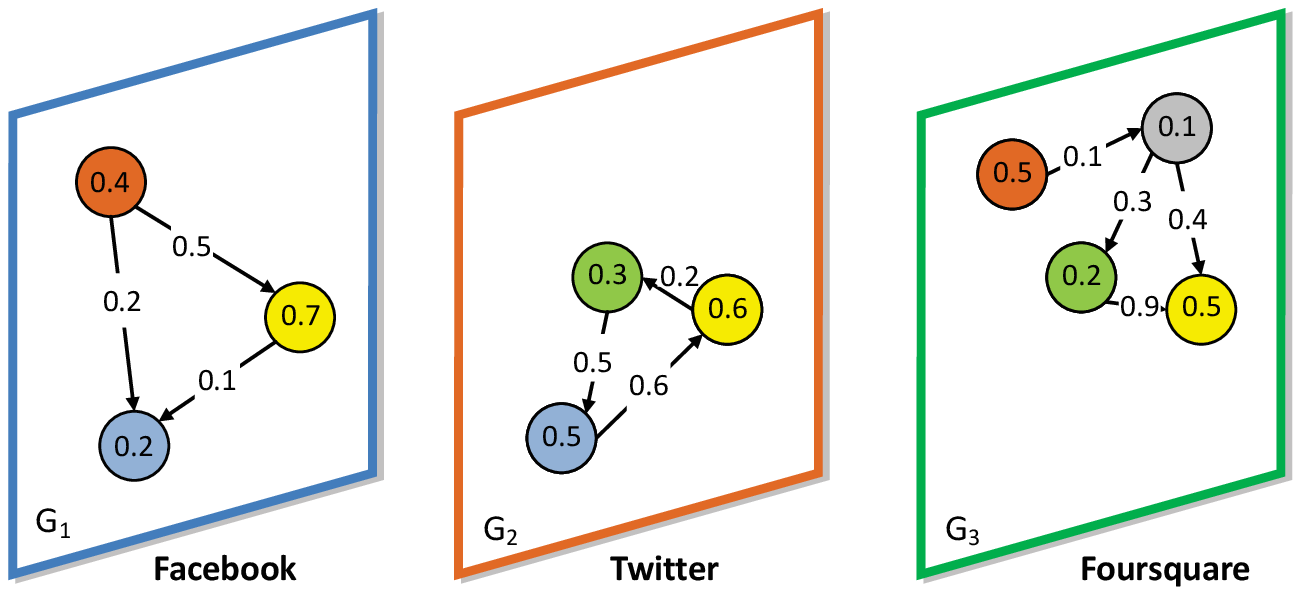}%
		\label{fig:original_network}
}
\subfigure[The influence of gateway vertices on represenative vertices represent the influence between users in multiplex networks.]  { 
		\centering
		\includegraphics[width=0.5\columnwidth]{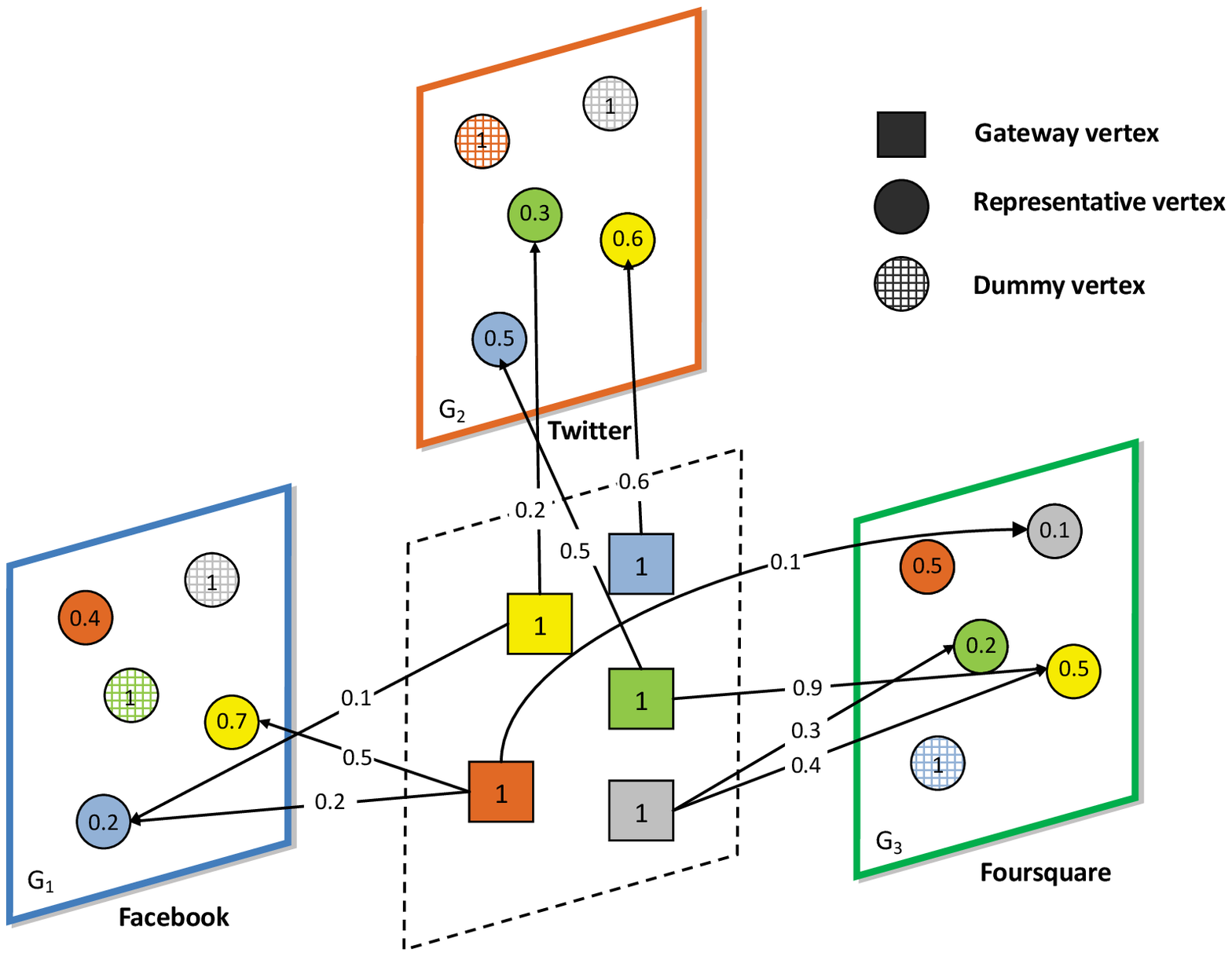}%
}
\subfigure[The connection between gateway and representative vertices of the same user.] { 
		\centering
		\includegraphics[width=0.36\columnwidth]{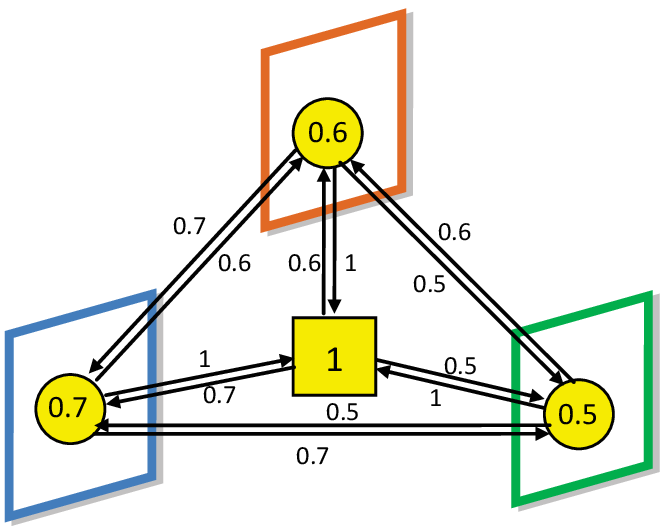}%
		\label{fig:clique}
}
\caption{An example of the \textit{clique lossless coupling scheme}.}%
\label{fig:lossless_scheme_example}%
\vspace{-15pt}
\end{figure}

Next we will show that the propagation process in the original multiplex networks and the coupled network is actually the same. Influence is alternatively propagated between gateway and representative vertices, so the problem with $d$ hops in the multiplex networks is equivalent to the problem with $2d$ hops in the coupled network.

\begin{lemma}
\label{le:lossless_activate}
Suppose that the propagation process in the coupled network $G$ starts from the seed set which contains only gateway vertices $S = \{s^0_1, \ldots, s^0_p\}$, then representative vertices are activated only at even propagation hops.
\end{lemma}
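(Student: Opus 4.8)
The plan is to prove the claim by induction on the propagation hop $t$, exploiting the fact that, with respect to influence flow, the coupled graph $G$ is bipartite between the representative vertices on one side and the gateway vertices on the other (the dummy vertices are isolated, carry threshold $1$, and are never seeds, so they never activate and may be ignored). First I would record the structural observation that drives everything: in the clique construction every external edge emanates from a gateway, since each original edge $(u^i,v^i)$ was redirected to $(u^0,v^i)$; and within a single user the representatives $u^1,\ldots,u^k$ influence one another only through the gateway $u^0$, so the only edges incident to a representative $u^i$ are the gateway-to-representative edge $(u^0,u^i)$ of weight $\theta(u^i)$ and the representative-to-gateway edge $(u^i,u^0)$ of weight $\theta(u^0)=1$, together with incoming external edges $(w^0,u^i)$ that again originate at gateways. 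Hence every in-neighbor of a representative vertex is a gateway, and every in-neighbor of a gateway is one of that user's representatives. This is the property I would lean on to force the activation to alternate strictly between the two sides.

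With that in hand I would set up the induction on the statement that the set of vertices \emph{newly} activated at hop $t$ lies entirely on one side of the bipartition, and that the side it lies on alternates with the parity of $t$. The base case is the initial configuration, in which the only active vertices are the gateway seeds $S=\{s^0_1,\ldots,s^0_p\}$ and no representative is active; this anchors the phase of the alternation and determines that representatives are confined to even hops. For the inductive step I would use the standard monotonicity of the LT dynamics: a vertex $v$ can cross its threshold for the first time at hop $t+1$ only if at least one of its in-neighbors was itself newly activated at hop $t$, for otherwise its set of active in-neighbors — and thus its incoming active weight — would have been identical already at hop $t$ and would have sufficed then. Since by the structural observation all in-neighbors of $v$ lie on the side opposite to $v$, and by the inductive hypothesis the vertices freshly activated at hop $t$ all lie on a single side, $v$ must lie on the complementary side. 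This propagates the alternation by one hop, and threading the phase from the base case places every representative activation on an even hop, which also yields the announced correspondence between $d$ hops in the multiplex system and $2d$ hops in $G$.

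The step I expect to be the main obstacle is justifying the bipartite structure cleanly in the presence of the same-user clique, namely ruling out any one-hop representative-to-representative activation that would collapse two hops into one and destroy the parity. I would handle this by making explicit that, once $u^0$ is inserted, the representatives of a user communicate \emph{only} through $u^0$, so that no representative is ever an in-neighbor of another representative and the clique reduces, for activation purposes, to the gateway-representative edges described above. A secondary point worth verifying is that the alternation never stalls on one side: because a gateway has threshold $1$ and each representative-to-gateway weight is exactly $1$, a single active representative fires its gateway on the very next hop, and symmetrically $(u^0,u^i)$ of weight $\theta(u^i)$ fires the representative on the next hop; thus activation genuinely moves across the bipartition at every step rather than lingering, and the even-hop conclusion follows.
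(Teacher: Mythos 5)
Your proof rests on a structural claim that is false for the clique coupling scheme as constructed: you assert that ``no representative is ever an in-neighbor of another representative'' and that the clique reduces, for activation purposes, to gateway--representative edges. But the construction adds synchronization edges between \emph{every} pair among $u^0, u^1, \ldots, u^k$, with $w(u^i,u^j)=\theta(u^j)$ for all $0 \leq i \neq j \leq k$ --- including $i,j \geq 1$, i.e., direct representative-to-representative edges (that is why the scheme is called a clique; your described topology is closer in spirit to the star scheme, which nevertheless routes through an extra intermediate vertex $u^{k+1}$, not the gateway). So the influence graph is not bipartite between the two sides, and your alternation invariant breaks immediately: with seed $s^0$ at hop $0$, an external representative $v^i$ can activate at hop $1$, and then its sibling $v^j$ activates at hop $2$ via the edge $w(v^i,v^j)=\theta(v^j)$ --- at the same even hop at which the gateway $v^0$ activates. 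The newly activated set at hop $2$ thus contains vertices from both sides, so the inductive step ``all fresh activations lie on one side'' fails. The paper itself relies on this sibling-to-sibling activation: Case 2 in the proof of Theorem 1 explicitly handles a representative $u^i$ activated at an even hop $2d$ by some $u^j$ with $j>0$.

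There is also a phase problem independent of bipartiteness. Even under your (incorrect) bipartite reading, gateway seeds at hop $0$ would make representatives first activate at hop $1$, i.e., at \emph{odd} hops, contradicting your stated conclusion that representatives are confined to even hops; your base case and your conclusion carry opposite parities. The resolution is that the lemma's wording is evidently a typo: what the paper proves, and what Lemma 2 (representatives active after $2d-1$ hops, gateways after $2d$) and Theorem 1 (``all gateway vertices only change their state at even hops'') actually use, is that \emph{gateway} vertices are activated only at even hops. The paper's argument is a minimal-counterexample one: take the first gateway $u^0$ activated at an odd hop $2d+1$; it must be triggered by the first-activated sibling $u^i$ (activated at hop $2d$), and because $u^i$ is first among the siblings, its own activation cannot come from a sibling edge and must be attributable to a gateway in-neighbor newly activated at hop $2d-1$, contradicting minimality. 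Note how the ``first among siblings'' device is precisely what handles the representative-to-representative edges you assumed away; a repaired version of your induction would need the same device, and would prove the one-sided claim about gateways rather than the two-sided alternation you proposed.
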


\begin{proof}
Suppose that a gateway vertex $u^0$ is the first gateway vertex that is activated at the odd hops $2d+1$. $u^0$ must be activated by some vertex $u^i$ and $u^i$ is the first activated vertex among vertices $u^1, u^2, \ldots, u^k$. It means that $u^i$ is activated in hop $2d$. Since all incoming neighbors of $u^i$ are gateway vertices, some gateway vertex becomes active in hop $2d-1$ (contradiction).
\end{proof}

\begin{lemma}
\label{le:lossless_mapping}
Suppose that the propagation process on $G^{1\ldots k}$ and $G$ starts from the same seed set $S$, then following conditions are equivalent: 
\begin{itemize}
	\item[(1)] User $u$ is active after $d$ propagation hops in $G^{1\ldots k}$. 
	\item[(2)] There exists $i$ such that $u^i$ is active after $2d-1$  propagation hops in $G$.  
	\item[(3)] Vertex $u^0$ is active after $2d$ propagation hops in $G$.
\end{itemize}

\end{lemma}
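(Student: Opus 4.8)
The plan is to prove the statement by induction on $d$, splitting the three-way equivalence into a purely structural part, $(2)\Leftrightarrow(3)$, in which the representative and gateway copies of $u$ differ by exactly one hop, and the substantive part, $(1)\Leftrightarrow(2)$, which is the only place the threshold dynamics on $G^{1\ldots k}$ enters. Throughout I would use Lemma~\ref{le:lossless_activate} to control the parity of activation times, so that at each hop I know which vertices can possibly change state.

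First I would record the local structure of the coupled graph that drives everything. The only in-neighbors of a gateway $u^0$ are its own representatives $u^1,\dots,u^k$, each clique edge $(u^i,u^0)$ carrying weight $\theta(u^0)=1$, which by itself meets the gateway threshold; a representative $u^i$, on the other hand, receives influence from the gateways $v^0$ of its network-$i$ in-neighbors (edge weight $w^i(v,u)$) together with the clique edges incoming from $u^0$ and its siblings $u^j$ (each of weight $\theta(u^i)$). From these facts $(2)\Rightarrow(3)$ is immediate: if some $u^i$ is active by hop $2d-1$, the weight-$1$ edge $(u^i,u^0)$ activates $u^0$ by hop $2d$. Conversely $(3)\Rightarrow(2)$ follows because a gateway can only be triggered through one of its own representatives, so an active $u^0$ at hop $\le 2d$ forces some $u^i$ active at hop $\le 2d-1$; the seed case $u\in S$ (where $u^0$ is active already at hop $0$) is handled separately by noting that the clique edge $(u^0,u^i)$ of weight $\theta(u^i)$ activates every $u^i$ at hop $1\le 2d-1$ whenever $d\ge 1$.

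For the substantive link I would run an induction whose hypothesis is that the users active after $d-1$ hops on $G^{1\ldots k}$ are exactly the users whose gateways are active after $2(d-1)$ hops on $G$. Granting this, the parity from Lemma~\ref{le:lossless_activate} guarantees that no gateway changes state between hops $2d-2$ and $2d-1$, so the gateways that can push a representative $u^i$ over its threshold at hop $2d-1$ are precisely the $v^0$ with $v$ active after $d-1$ hops. The incoming weight into $u^i$ from these gateways is $\sum_{v\in N^{i-}_{u}} w^i(v,u)$ over the in-neighbors $v$ active after $d-1$ hops, which crosses $\theta^i(u)=\theta(u^i)$ exactly when the original activation rule fires for $u$ in network $i$ at hop $d$. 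Taking the union over $i$ then gives that $u$ activates within $d$ hops on $G^{1\ldots k}$ iff some $u^i$ activates within $2d-1$ hops on $G$, which closes $(1)\Leftrightarrow(2)$, and combining with the structural $(2)\Leftrightarrow(3)$ advances the inductive hypothesis to $d$.

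The main obstacle I anticipate is the bookkeeping around the internal clique edges, making sure they neither manufacture spurious early activations nor corrupt the alternation between hops $2d-1$ and $2d$. A representative $u^i$ could a priori be activated by a sibling $u^j$ or by $u^0$ rather than through the intended external-gateway route, and I must argue this never happens before $u^0$ is itself active, so that sibling synchronization only ever re-confirms an activation already justified by the gateway path. This is exactly where Lemma~\ref{le:lossless_activate} is indispensable: I would lean on it to exclude off-parity activations and thereby upgrade the inductive hypothesis from a one-sided bound to the \emph{exact} correspondence between the two activation sets, which is what makes the threshold comparison in the previous paragraph valid in both directions.
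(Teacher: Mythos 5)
Your proposal is correct and follows essentially the same route as the paper: an induction on $d$ whose hypothesis is the exact correspondence between users active after $t$ hops in $G^{1\ldots k}$ and gateways active after $2t$ hops in $G$, with Lemma~\ref{le:lossless_activate} supplying the parity control and the weight identities $w(v^0,u^j)=w^j(v,u)$, $\theta(u^j)=\theta^j(u)$ driving the threshold comparison (the paper merely organizes the equivalence as the cycle $(1)\Rightarrow(2)\Rightarrow(3)\Rightarrow(1)$ rather than your split into a structural $(2)\Leftrightarrow(3)$ and a substantive $(1)\Leftrightarrow(2)$). If anything, you are more careful than the published proof, which silently assumes the activating representative $u^j$ was triggered through external gateways only; your explicit argument that a sibling or own-gateway activation at hop $2d-2$ would already force $u^0\in A(2(d-1))$, reducing to the inductive hypothesis, closes a step the paper leaves implicit.
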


\begin{proof}
We will prove this lemma by induction. Suppose it is correct for any $1 \leq d \leq t$, we need to prove it is correct for $d = t + 1$. Denote $A^{1 \ldots k} (t)$ and $A$(t) as the set of active users and active vertices after $t$ propagation hops in $G^{1\ldots k}$ and $G$, respectively.   

$(1) \Rightarrow (2)$:
If user $u$ is active at time $t + 1$ in $G^{1\ldots k}$, it must be activated in some network $G^j$. We have:
\[\sum_{v \in N^{j-}_u \cap A^{1 \ldots k}(t) } {w^j(v, u)} \geq \theta^j(u)\] 

Due to the induction assumption, for each $v \in A^{1 \ldots k}(t)$, we also have $v^0 \in A(2t)$ in $G$. Thus:
\[
\begin{split}
\sum_{v^0 \in N^-_{u^j} \cap A(2t) } {w(v^0, u^j)} &= \sum_{v \in N^{j-}_u \cap  A^{1 \ldots k}(t) } {w^j(v, u)} \geq \theta^j(u)\\
 & = \theta(u^j) \\
\end{split}
\] 
It means that $u^j$ is active after $(2(t+1) -1)$ propagation hops. 

$(2) \Rightarrow (3)$:
If there exists $i$ such that $u^i$ is active after $2(t+1) -1$  propagation hops on $G$, then $u^i$ will activate $u^0$ in hop $2(t+1)$

$(3) \Rightarrow (1)$:
Suppose that $u^0 \notin S$ is active after $2(t+1)$ propagation hops in $G$, then there  exists $u^j$ which activates $u^0$ before. This is equivalent to:
\[
\sum_{v \in N^-_{u^j}, v \in A(2t) } {w(v, u^j)} \geq \theta(u^j)
\] 

For each $v \in A(2t)$, we also have $v \in A^{1 \ldots k}(t)$. Replace this into the above inequality we have:
\[
\begin{split}
 \sum_{v \in N^{j-}_u \cap A^{1 \ldots k}(t) } {w^j(v, u)} &= \sum_{v^0 \in N^-_{u^j} \cap A(2t) } {w(v^0, u^j)}\\
& \geq \theta(u^j) = \theta^j(u)
\end{split}
\] 

Thus, $u$ is active in network $G^j$ after $t+1$ hops.
\end{proof}

Next, we will show that the number of influenced vertices in the coupled network is $(k+1)$ times the number of influenced users in multiplex networks as stated in Theorem \ref{th:lossless_eqal}.

\begin{theorem}
\label{th:lossless_eqal}
Given a system of $k$ networks $G^{1\dots k}$ with the user set $U$, the coupled network $G$ produced by the lossless coupling scheme, and a seed set $S = \{s_1, s_2, \ldots, s_p\}$, if $A^d(G^{1\dots k}, S)$ $= \{a_1, a_2,$ $\ldots, a_q\}$ is the set of active users caused by $S$ after $d$ propagation hops in multiplex networks, then $A^{2d}(G, S)$ $ = \{a^0_1, a_1^1, \ldots, a^k_1, \ldots$, $a^0_q, a_q^1, \ldots, a_q^k\}$ is the set of active vertices caused by $S$ after $2d$ propagation hops in the coupled network.
\end{theorem}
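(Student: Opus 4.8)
The plan is to prove the set equality by establishing the two inclusions $\{a^0_1, a^1_1, \ldots, a^k_q\} \subseteq A^{2d}(G,S)$ and $A^{2d}(G,S) \subseteq \{a^0_1, a^1_1, \ldots, a^k_q\}$, using Lemma~\ref{le:lossless_mapping} to pass between a user and its gateway vertex and a single structural property of the clique gadget to capture all the representative vertices at once. Before either inclusion, I would isolate one observation about the clique $\{u^0, u^1, \ldots, u^k\}$ of a fixed user $u$: every internal edge $(u^i, u^j)$ has weight $\theta(u^j)$, which is exactly the threshold of its head (dummy and gateway vertices having threshold $1$). Hence the instant any single vertex of the clique is active, each of the remaining $k$ vertices meets its threshold and turns active at the very next hop. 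I will call this the \emph{clique-completion} step; it is what upgrades the existence statement in Lemma~\ref{le:lossless_mapping}(2) into the full list of $k+1$ vertices.

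For the forward inclusion, fix an active user $a_j \in A^d(G^{1\ldots k}, S)$ and let $d^*_j \le d$ be the first hop at which $a_j$ becomes active, say in network $G^m$. Tracing the mechanism behind Lemma~\ref{le:lossless_mapping}, the neighbors of $a_j$ that trigger it in $G^m$ have their gateway vertices active by hop $2(d^*_j-1)$, so their combined influence drives the representative $a^m_j$ active by hop $2d^*_j-1$. Clique-completion then makes $a^0_j$ and every $a^i_j$ active by hop $2d^*_j \le 2d$, placing all $k+1$ vertices in $A^{2d}(G,S)$; seed users form the base case, with the gateway active at hop $0$ and its clique completing at hop $1 \le 2d$. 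This also yields the stated cardinality $(k+1)q$.

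For the reverse inclusion, take any $x \in A^{2d}(G,S)$ and let $u$ be the user owning $x$ (so $x$ is $u^0$ or some $u^i$). The first vertex of $u$'s clique ever to fire cannot be activated along a clique edge, since that would require another clique member of $u$ already active; so it is a representative activated through network influence, and since gateways fire only at even hops (Lemma~\ref{le:lossless_activate}) this first activation occurs at an odd hop $2d^*_u-1$. Because $x$ is active after $2d$ hops we get $2d^*_u-1 \le 2d$, hence $d^*_u \le d$; the gateways feeding that activation are active by hop $2(d^*_u-1)$, so Lemma~\ref{le:lossless_mapping} forces user $u$ to be active after $d^*_u \le d$ hops, i.e.\ $u = a_j$ and $x \in \{a^0_j, \ldots, a^k_j\}$. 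Combining the two inclusions gives the theorem.

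The step I expect to be the real obstacle is the timing in the forward inclusion: Lemma~\ref{le:lossless_mapping} by itself only delivers the gateway $a^0_j$ (part (3)) together with a single representative (part (2)), and routing the other representatives through the gateway would activate them at hop $2d^*_j+1$, one hop too late to sit in $A^{2d}(G,S)$. The argument closes only because the clique carries the direct representative-to-representative edges, so the first-activated representative lights up the entire clique---gateway and isolated dummy vertices included---in one synchronized hop. I would therefore be careful to push the synchronization through those representative-to-representative edges and to verify explicitly that a dummy $u^i$ (threshold $1$, incoming clique weight $\theta(u^i)=1$) is indeed captured.
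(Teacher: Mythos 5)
Your proposal is correct and follows essentially the same route as the paper's proof: both establish the two inclusions via Lemma~\ref{le:lossless_mapping}, your ``clique-completion'' observation is exactly the paper's implicit step from one active representative at hop $2d-1$ to all $k+1$ clique vertices at hop $2d$, and your ``first clique vertex to fire'' argument is a repackaging of the paper's gateway/representative case analysis, resting on the same parity fact from Lemma~\ref{le:lossless_activate}. Your write-up is merely more explicit than the paper's (tracking first activation times, noting the seed base case, and verifying the dummy vertices), with no gap and no genuinely different idea.
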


\begin{proof}
For each user $a_i \in A^d(G^{1\dots k}, S)$ i.e. $a_i$ is active after $d$ hops in $G^{1\dots k}$, then there exists $a_i^j$ which is active after $2d-1$ hops in $G$ according to the Lemma \ref{le:lossless_mapping}. As a result, all $a^0_i, a_i^1, \ldots, a_i^k$ are active after $2d$ hops. So $B = \{a^0_1, a_1^1, \ldots, a^k_1, \ldots$, $a^0_q, a_q^1, \ldots,$ $a_q^k\}$ $\subseteq A^{2d}(G, S)$.

Let consider a vertex of $A^{2d}(G, S)$ which is:

\textit{Case 1.} A gateway vertex $u^0$ which is active after $2d$ hops in $G$, so vertex $u$ must be active after $d$ hops in $G^{1\dots k}$. This implies $u \in A^d(G^{1\dots k}, S)$, thus $u^0 \in B$.

\textit{Case 2.} An representative vertex $u^i$. If $u^i$ is active after $2d-1$ hops, then $u$ must be active after $d$ hops due to Lemma \ref{le:lossless_mapping}, thus $u \in A^d(G^{1\dots k}, S)$. Otherwise, $u^i$ is activated at hop $2d$ , it must be activated by some vertex $u^j$, $j > 0$ since all gateway vertices only change their state at even hops. Again, $u \in A^d(G^{1\dots k}, S)$. This results in $u^i \in B$.

From two above cases, we also have $A^{2d}(G, S) \subseteq B$. So that $A^{2d}(G, S) = B$, the proof is completed.
\end{proof}

Theorem \ref{th:lossless_eqal} provides the basis to derive the solution for LCI in multiplex networks from the solution on a single network. It implies an important algorithmic property of the \textit{lossless coupling scheme} regarding the relationship between the solutions of LCI in $G^{1 \ldots k}$ and $G$. The equivalence of two solutions is stated below:

\begin{theorem}
When the \textit{lossless scheme} is used, the set $S = \{s_1, s_2, \ldots, s_p\}$ influences $\beta$ fraction of users in $G^{1 \ldots k}$ after $d$ propagation hops if and only if $S^\prime = \{s^0_1, s^0_2, \ldots, s^0_p\}$ influences $\beta$ fraction of vertices in coupled network $G$ after $2d$ propagation hops.
\end{theorem}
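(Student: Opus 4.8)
The plan is to treat this statement as an immediate corollary of Theorem~\ref{th:lossless_eqal}, whose content already establishes the exact relationship between active users in $G^{1\ldots k}$ and active vertices in $G$. The whole argument reduces to rewriting the two ``$\beta$ fraction'' conditions as cardinality inequalities and observing that the active counts and the total counts scale by one and the same factor, so that the two inequalities become literally identical.

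First I would pin down the seed sets. In the multiplex system the seed $S=\{s_1,\ldots,s_p\}$ is a set of users, while on the coupled network the natural seed is the set of their gateway vertices $S'=\{s^0_1,\ldots,s^0_p\}$; this is precisely the gateway seed used in Lemma~\ref{le:lossless_activate} and in Theorem~\ref{th:lossless_eqal}, so $A^{2d}(G,S')$ coincides with the set denoted $A^{2d}(G,S)$ there. Applying Theorem~\ref{th:lossless_eqal}, if $A^d(G^{1\ldots k},S)=\{a_1,\ldots,a_q\}$ then $A^{2d}(G,S')=\{a^0_i,a^1_i,\ldots,a^k_i:1\le i\le q\}$, and hence $|A^{2d}(G,S')|=(k+1)\,|A^d(G^{1\ldots k},S)|$.

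Next I would count the total number of vertices in $G$. The clique scheme contributes to each user exactly one gateway vertex together with $k$ representative vertices (a genuine representative when the user belongs to that network, an isolated dummy otherwise), so $|V|=(k+1)|U|$. Combining this with the previous identity, the fraction of active vertices in $G$ equals
\[
\frac{|A^{2d}(G,S')|}{|V|}=\frac{(k+1)\,|A^d(G^{1\ldots k},S)|}{(k+1)\,|U|}=\frac{|A^d(G^{1\ldots k},S)|}{|U|},
\]
which is exactly the fraction of active users in $G^{1\ldots k}$. Therefore $|A^{2d}(G,S')|\ge \beta|V|$ holds if and only if $|A^d(G^{1\ldots k},S)|\ge \beta|U|$, which is the claimed equivalence.

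The only point demanding care is the vertex bookkeeping behind the constant $k+1$: one must confirm that every user contributes the same number $k+1$ of vertices regardless of how many networks it actually joins, and that Theorem~\ref{th:lossless_eqal} indeed counts the dummy representatives of an active user as active. Once this is checked, both numerator and denominator scale by $k+1$ and cancel, so no genuine obstacle remains --- the statement is essentially a reformulation of the scale-up property already proved.
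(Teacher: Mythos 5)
Your proposal is correct and matches the paper's intent exactly: the paper states this theorem without a separate proof, presenting it as an immediate consequence of Theorem~\ref{th:lossless_eqal}, and your argument simply spells out the implicit bookkeeping --- the identity $|A^{2d}(G,S')|=(k+1)|A^d(G^{1\ldots k},S)|$ from Theorem~\ref{th:lossless_eqal} together with $|V|=(k+1)|U|$ from the paper's size analysis of the clique scheme, so the factor $k+1$ cancels and the two $\beta$-fraction conditions coincide. Your cautionary check that dummy representatives of active users are indeed activated (via the synchronization edges of weight $\theta(u^j)=1$ into threshold-$1$ dummies) is exactly the point that makes the constant uniform across users, and it is consistent with how the paper counts vertices.
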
 

\textit{Size of the coupled network.} Each user $u$ has $k+1$ corresponding vertices $u^0, u^1, \ldots, u^k$ in the coupled network, thus the number of vertices is $|V| = (k+1)|U| = (k+1)n$. The number of edges equals the total number of edges from all input networks plus the number of new edges for synchronizing. Thus the total number of edges is $|E| = \sum_{i=1}^k |E^i| + nk(k+1)$.

\subsection{Star lossless coupling scheme}
In last subsection, we discussed the clique lossless coupling scheme, however, in this scheme, the number of edges to synchronize the state of vertices $u^0, u^1, \ldots, u^k$ is added up to $k(k+1)$ for each user $u$, which results in $nk(k+1)$ extra edges in the coupled network. In real networks, the number of edges is often linear to the number of vertices, while the number of extra edges greatly increases the size of the coupled network, especially when $k$ is large. Therefore, we would like to design another synchronization strategy that has less additional edges. 

\begin{figure}[h]
\centering
\includegraphics[width=0.35\columnwidth]{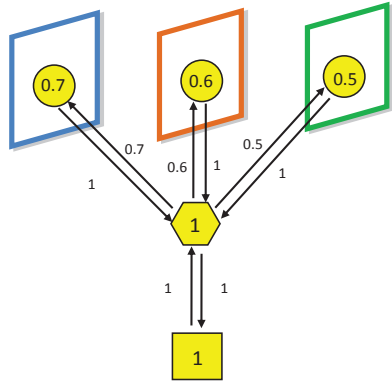}%
\caption{Synchronization of \textit{star lossless coupling scheme}.}%
\label{fig:star_lossless_scheme}%
\vspace{-5pt}
\end{figure}

Note that the large volume of extra edges is due to the direct synchronization between each pair of representative vertices of $u$ in \textit{clique lossless coupling scheme}, so we can reduce it by using indirect synchronization. In the new coupling scheme, we create one intermediate vertex $u^{k+1}$ with threshold $\theta(u^{k+1}) = 1$ and let the active state propagate from any vertex in $u^1, u^2, \ldots, u^k$ via this vertex. Specifically, the synchronization edges are established as follows: $w(u^i, u^{k+1}) = 1$ and $w(u^{k+1}, u^i) = \theta(u^i)$ $1 \leq i \leq k$;   $w(u^{k+1}, u^0) = w(u^0, u^{k+1}) = 1$. The synchronization strategy of \textit{star lossless coupling scheme} is illustrated in Fig. \ref{fig:star_lossless_scheme}. Now, the number of extra edge for each user is $2(k+1)$ and the size of the coupled network is reduced as shown in the following proposition.

\begin{proposition}
When \textit{star lossless scheme} is used, the coupled network has $|V| = (k+2)|U| = (k+2)n$ vertices and $|E| = \sum_{i=1}^k |E^i| + 2n(k+1)$ edges. 
\end{proposition}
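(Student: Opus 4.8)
The plan is to establish both equalities by direct enumeration, treating the vertex count and the edge count separately, and for the edges splitting the total into the \emph{influence edges} inherited from the individual networks and the \emph{synchronization edges} newly introduced by the star scheme.

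First I would count the vertices. Following the construction, each user $u \in U$ gives rise to a gateway vertex $u^0$, the $k$ representative vertices $u^1, \ldots, u^k$ (one per network, with dummy vertices supplied in those networks where $u$ does not participate), and the single intermediate vertex $u^{k+1}$. These $k+2$ vertices are pairwise distinct and associated with a unique user, and every vertex of $G$ arises in this manner, so the assignment $u \mapsto \{u^0, u^1, \ldots, u^{k+1}\}$ partitions $V$ into $n = |U|$ blocks of size $k+2$. This gives $|V| = (k+2)n$ immediately.

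Next I would count the edges. The influence edges are exactly the edges $(u^0, v^i)$ created whenever $(u,v) \in E^i$; since dummy vertices are isolated and each edge of $G^i$ contributes precisely one such edge, these number $\sum_{i=1}^k |E^i|$. The synchronization edges are added per user: for each user $u$ there are the $k$ edges $(u^i, u^{k+1})$ and the $k$ edges $(u^{k+1}, u^i)$ for $1 \le i \le k$, together with the two edges $(u^{k+1}, u^0)$ and $(u^0, u^{k+1})$, for a total of $2k + 2 = 2(k+1)$ synchronization edges per user and hence $2n(k+1)$ in all. Summing the two contributions yields $|E| = \sum_{i=1}^k |E^i| + 2n(k+1)$.

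The only point requiring care, rather than a genuine obstacle, is to confirm that no edge is counted twice and that the synchronization gadgets of different users do not interfere. Since $u^{k+1}$ is a fresh vertex attached to the single user $u$, the synchronization edges of distinct users lie in disjoint vertex neighborhoods, and the influence edges (always of the form $(u^0, v^i)$) never coincide with the synchronization edges (which always involve some intermediate vertex $u^{k+1}$). Verifying this disjointness confirms that the two edge classes can be added without overlap, so the total is exact and the proposition follows.
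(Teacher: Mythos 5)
Your proof is correct and follows exactly the counting that the paper's construction implies: the paper states this proposition without a formal proof, but its description of the star scheme already gives the $k+2$ vertices per user ($u^0, u^1, \ldots, u^k, u^{k+1}$) and the $2(k+1)$ synchronization edges per user that you enumerate. Your added check that influence edges and synchronization edges are disjoint across users is a sensible (if routine) completion of the same argument, so there is nothing to flag.
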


In \textit{star lossless coupling scheme}, it takes 2 hops to synchronize the states of representative vertices of each user which leads to delaying the propagation of influence in the coupled network. Due to the similarity between \textit{star lossless scheme} and \textit{clique lossless scheme}, we state the following property of \textit{star lossless scheme} without proof.

\begin{theorem}
When \textit{star lossless coupling scheme} is used, the set $S = \{s_1, s_2, \ldots, s_p\}$ influences $\beta$ fraction of users in $G^{1 \ldots k}$ after $d$ propagation hops if and only if $S' = \{s^0_1, s^0_2, \ldots, s^0_p\}$ influences $\beta$ fraction of vertices in coupled network $G$ after $3d$ propagation hops.
\end{theorem}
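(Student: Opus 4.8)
The plan is to follow exactly the three-stage argument used for the clique scheme, adapting each stage to account for the extra intermediate vertex $u^{k+1}$. Concretely, I would (i) prove a timing lemma analogous to Lemma~\ref{le:lossless_activate} that pins down, modulo $3$, the hop at which each type of vertex can become active; (ii) prove an equivalence lemma analogous to Lemma~\ref{le:lossless_mapping} relating activation of a user $u$ in $G^{1\ldots k}$ to activation of $u^i$, $u^{k+1}$, and $u^0$ in $G$; and (iii) combine these into a counting statement analogous to Theorem~\ref{th:lossless_eqal}, from which the $\beta$-fraction equivalence follows immediately because every active user contributes exactly $k+2$ active vertices.

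For stage (i), the crucial structural observation is that in the star scheme the only edges carrying ``real'' influence (those with weights inherited from the original networks) emanate from gateway vertices, namely the edges $(u^0, v^i)$ of weight $w^i(u,v)$. Every other edge is a unit-threshold synchronization edge confined to the vertices $u^0, u^1, \ldots, u^k, u^{k+1}$ of a single user $u$. I would argue that a non-seed gateway activates only at hops that are multiples of $3$, a representative $u^i$ only at hops $\equiv 1 \pmod 3$, and the intermediate $u^{k+1}$ only at hops $\equiv 2 \pmod 3$, so that one ``real'' propagation step in $G^{1\ldots k}$ corresponds to the length-$3$ cycle gateway $\to$ representative $\to$ intermediate $\to$ gateway in $G$. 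The argument is the same minimal-counterexample induction on hop number as in Lemma~\ref{le:lossless_activate}: if some gateway were first activated off-schedule, it must have been activated by an intermediate vertex activated one hop earlier, which in turn was activated by a representative, which was activated by a gateway, contradicting minimality.

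For stage (ii), I would prove by induction on $d$ that the following are equivalent: user $u$ is active after $d$ hops in $G^{1\ldots k}$; some $u^i$ is active after $3d-2$ hops in $G$; $u^{k+1}$ is active after $3d-1$ hops; and $u^0$ is active after $3d$ hops. The implication from user-activation to representative-activation reuses verbatim the weight identity of Lemma~\ref{le:lossless_mapping}: summing $w(v^0, u^i) = w^i(v,u)$ over active in-neighbors and comparing with $\theta(u^i) = \theta^i(u)$ shows $u^i$ fires at hop $3d-2$ once all relevant $v^0$ are active at hop $3(d-1) = 3d-3$. The remaining implications are the two synchronization hops through the new vertex: $u^i$ active $\Rightarrow u^{k+1}$ active one hop later (unit edge $w(u^i,u^{k+1})=1$), and $u^{k+1}$ active $\Rightarrow u^0$ active one hop later (unit edge $w(u^{k+1},u^0)=1$), with the converse chain handled symmetrically.

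The main obstacle I anticipate is not the weight bookkeeping (which transfers mechanically from the clique proof) but verifying the \emph{cleanliness} of the three-hop schedule in the presence of the extra vertex, in particular ruling out shortcuts through $u^{k+1}$. Two points need care: the bidirectional edge between $u^0$ and $u^{k+1}$ means a seed gateway $s^0$ can fire $s^{k+1}$ at hop $1$ and thereby $s^i$ at hop $2$, placing the seed's own vertices off the mod-$3$ schedule; and one must confirm this never leaks influence to other users. This is settled by the stage-(i) observation that representatives and intermediates have no outgoing real-influence edges, so an off-schedule seed vertex can only re-activate vertices of the same user $s$ and cannot advance the activation of any other user. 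Once this is established, the counting of stage (iii) gives $|A^{3d}(G,S)| = (k+2)\,|A^d(G^{1\ldots k},S)|$ and $|V| = (k+2)|U|$, so $|A^{3d}(G,S)| \geq \beta |V|$ iff $|A^d(G^{1\ldots k},S)| \geq \beta |U|$, which is the claimed equivalence.
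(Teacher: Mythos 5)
Your proposal is correct and takes essentially the approach the paper intends: the paper states this theorem \emph{without proof}, explicitly deferring to the similarity with the clique scheme, and your three-stage adaptation (timing lemma, equivalence lemma, counting theorem, with the length-$3$ cycle gateway $\to$ representative $\to$ intermediate $\to$ gateway replacing the length-$2$ cycle) is exactly that adaptation, including the one genuinely new wrinkle --- the seed's own satellites firing off-schedule through the bidirectional edges between $u^0$ and $u^{k+1}$ --- which you correctly neutralize by noting that representatives and intermediates carry no real-influence out-edges. One small tightening: a secondary representative $u^j$ of a non-seed user activated via $u^{k+1}$ fires at hop $3t \equiv 0 \pmod{3}$, so your stage-(i) claim that representatives activate only at hops $\equiv 1 \pmod{3}$ should be restricted to the \emph{first}-activated representative of each user, with the stragglers absorbed in the stage-(iii) case analysis exactly as Case~2 in the proof of Theorem~\ref{th:lossless_eqal} handles representatives activated at even hops in the clique scheme.
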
 
\vspace{-10pt}

\subsection{Reduced lossless schemes}

In all above coupling schemes, we create representative vertices in all networks in $G^{1 \ldots k}$ to guarantee that the number of influenced vertices in the coupled network is scaled up from the number of influenced users in the original system of networks. This creates an extraordinary redundant vertices. For example, we have $n$ users and 4 networks with $0.8n$, $0.6n$, $0.3n$, $0.2n$ users, then the total number of vertices in all network is only $1.9n$ while the number of vertices created in \textit{clique lossless scheme} and \textit{star lossless scheme} are $5n$ and $6n$, respectively. The redundant ratios of these two schemes are 260 \% and 315 \%. To overcome this redundant, we use assign weight for vertices in the coupled network and guarantee that the total weight of active vertices is scaled from the number of active users in the original system. In particular, we only create representative vertices $u^{i_1}, u^{i_2}, \ldots, u^{i_p}$ for user $u$ where $G^{i_1}, G^{i_2}, \ldots, G^{i_p}$ are networks that $u$ joins in. Each representative vertex is assigned weight 1, and the user vertex is assigned weight $k - p$. We have \textit{reduced clique lossless scheme} or \textit{reduced star lossless scheme} corresponding the method to synchronize the state of user and representative vertices is clique or star type. With this modification, the number of extra vertices is only $n$ and $2n$ when clique and star synchronization is used, respectively. The number of extra edges now depends on the participants of users.
\begin{proposition}
When \textit{reduced clique lossless scheme} or \textit{reduced star lossless scheme} is used, the coupled network has $|V| = \sum_{i=1}^k |V^i| + n$ or $|V| = \sum_{i=1}^k |V^i| + 2n$ vertices, respectively.
\end{proposition}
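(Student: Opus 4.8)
The plan is to establish both vertex counts by a single double-counting argument, handling the reduced clique and reduced star schemes in parallel since they differ only in the number of auxiliary synchronization vertices attached to each user. The total vertex set partitions cleanly into representative vertices on one hand and synchronization vertices (gateways, and in the star case intermediate vertices) on the other, so I would count these two groups separately and then add.

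First I would count the representative vertices. By construction, for each user $u \in U$ the scheme creates exactly one representative vertex $u^i$ for every network $G^i$ that $u$ joins, and none for the networks it does not join. Writing $p_u$ for the number of networks containing $u$, the user $u$ therefore contributes exactly $p_u$ representative vertices, and summing over all users gives $\sum_{u \in U} p_u$ representative vertices in total. The key identity to record is $\sum_{u \in U} p_u = \sum_{i=1}^k |V^i|$, which follows by double counting the membership pairs $\{(u,i) : u \in V^i\}$: grouping these pairs by the user yields $\sum_{u} p_u$, whereas grouping them by the network yields $\sum_{i} |V^i|$. Hence the number of representative vertices is precisely $\sum_{i=1}^k |V^i|$.

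Next I would count the synchronization vertices. In the reduced clique scheme each user $u$ is equipped with a single gateway vertex $u^0$, contributing exactly $n = |U|$ extra vertices; in the reduced star scheme each user $u$ is equipped with a gateway vertex $u^0$ together with one intermediate vertex $u^{k+1}$, contributing $2n$ extra vertices. This holds uniformly for every user, including those joining only one network. Combining the two counts yields $|V| = \sum_{i=1}^k |V^i| + n$ for the reduced clique scheme and $|V| = \sum_{i=1}^k |V^i| + 2n$ for the reduced star scheme, as claimed. The argument is elementary and presents no real obstacle; the only point requiring mild care is the conceptual distinction that the representative vertices scale with the total number of memberships $\sum_i |V^i|$, while the synchronization vertices scale with the number of users $n$ (one or two per user). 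The double-counting identity $\sum_u p_u = \sum_i |V^i|$ is exactly what converts the natural per-user count into the per-network sum appearing in the statement.
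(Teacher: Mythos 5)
Your proof is correct and follows essentially the same route as the paper, which states this proposition without a separate proof because it falls directly out of the construction: representative vertices are created only for networks a user actually joins (totaling $\sum_{i=1}^k |V^i|$ by exactly your membership-pair double count), plus one gateway vertex per user for the reduced clique scheme or a gateway plus one intermediate vertex per user for the reduced star scheme. Your explicit identity $\sum_{u} p_u = \sum_{i=1}^k |V^i|$ is a clean formalization of what the paper leaves implicit, and your remark that the auxiliary vertices are counted uniformly for all users (including those in a single network, whose user vertex carries weight $k-p$) is consistent with the paper's construction.
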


The relation between the a set of active vertices in coupled network and the set of active users in original networks is similar to previous schemes. We state this relation without proof below.

\begin{theorem}
When \textit{reduced clique lossless scheme} (\textit{reduced star lossless scheme}) is used, the set $S = \{s_1, s_2, \ldots, s_p\}$ influences $\beta$ fraction of users in $G^{1 \ldots k}$ after $d$ propagation hops if and only if the total weight of active vertices caused by $S' = \{s^0_1, s^0_2, \ldots, s^0_p\}$ after $2d$ ($3d$) hops in coupled network $G$ is $\beta$ fraction of the total weight of all vertices.
\end{theorem}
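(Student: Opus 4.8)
The plan is to deduce this statement from the two facts already established for the full lossless schemes, since a reduced scheme is nothing but a full scheme with the dummy vertices deleted and the vertices re-weighted. Concretely, I would split the argument into (i) a \emph{timing} part, showing that the activation hop of every gateway and (real) representative vertex is unchanged by the deletion of the dummy vertices, so that the hop-correspondence of Lemma~\ref{le:lossless_mapping} (and its $3d$ analogue for the star synchronization) carries over verbatim; and (ii) a \emph{counting} part, showing that the re-weighting makes each active user contribute a fixed total active weight, so that the fraction of active weight in $G$ equals the fraction of active users in $G^{1\ldots k}$.

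For part (i), the key observation is that in the full clique (resp.\ star) scheme a dummy vertex $u^j$ of a user $u$ that does not join $G^j$ has no outgoing edge into any original network: its only outgoing edges are the synchronization edges to the other vertices of $u$. Since all vertices of $u$ become active at the same (even, resp.\ multiple-of-three) hop once any real representative of $u$ fires, the presence of $u^j$ never lowers the activation hop of $u^0$, of any real representative $u^i$, or of any vertex of another user. Deleting the dummies therefore leaves the activation hop of every surviving vertex untouched, and the induction behind Lemma~\ref{le:lossless_mapping} goes through unchanged with the representatives $u^i$ now ranging only over the networks that $u$ joins. Consequently, exactly as in Theorem~\ref{th:lossless_eqal}, after $2d$ (resp.\ $3d$) hops the set of active vertices of $G$ is precisely $\{a_i^0\}\cup\{a_i^j : a_i \text{ joins } G^j\}$ taken over the active users $a_i\in A^d(G^{1\ldots k},S)$ (together with the weightless intermediate vertices $a_i^{k+1}$ in the star case).

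For part (ii), I would simply tally the weights. Writing $p_u$ for the number of networks that user $u$ joins, an active user $u$ contributes its gateway $u^0$ of weight $k-p_u$ together with its $p_u$ representatives of weight $1$ each (the intermediate vertex $u^{k+1}$ carrying weight $0$), for a total active weight of $(k-p_u)+p_u = k$, \emph{independent of $p_u$}. Hence if $q = |A^d(G^{1\ldots k},S)|$ users are active, the total active weight in $G$ is $qk$; applying the same count to all $n$ users shows that the total weight of $G$ is $nk$. Therefore the active-weight fraction $qk/(nk)=q/n$ coincides with the fraction of active users, so the weighted $\beta$-condition in $G$ holds if and only if $S$ influences a $\beta$ fraction of the users in $G^{1\ldots k}$, which is the claimed equivalence.

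The hard part is really part (i): I must be sure that removing the dummy vertices does not perturb any activation time, which rests on the fact that dummies are sinks in the original-edge structure and only ever activate the remaining vertices of the same user simultaneously with the synchronization that would occur anyway. Part (ii) is routine once the weighting convention is pinned down, the only point deserving explicit mention being that the gateway weight $k-p_u$ is chosen precisely so that the per-user active mass is the constant $k$, and that the star scheme's intermediate vertices must be declared weightless so as not to disturb this count.
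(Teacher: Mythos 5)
Your proposal is correct and takes essentially the approach the paper intends: the paper states this theorem without proof, remarking only that the relation is ``similar to previous schemes,'' and your two-part argument makes that similarity explicit. Part (i) (dummies are sinks with respect to original-network edges and can never be the first vertex of a user's synchronization gadget to fire, so deleting them preserves every surviving vertex's activation hop and Lemma~\ref{le:lossless_mapping}/Theorem~\ref{th:lossless_eqal} carry over, with the $3d$ analogue for star synchronization) together with part (ii) (per-user active weight $(k-p_u)+p_u=k$ is constant, so the weighted fraction $qk/nk$ equals the user fraction $q/n$, with the star intermediates taken weightless) is exactly the omitted verification, including the right observation that the gateway weight $k-p_u$ is what makes the scale-up property hold.
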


\subsection{Extensions to other diffusion models}
\label{se:extension}
In this section, we show that we can design lossless coupling schemes for some other well-known diffusion models in each component network. As a result, top influential users can be identified under these diffusion models. In particular, we investigate two most popular stochastic diffusion models which are Stochastic Threshold and Independent Cascading models \cite{Kempe2003}.
\begin{itemize}
    \item \textit{Stochastic Threshold model.} This model is similar to the Linear Threshold model but the threshold $\theta^i(u^i)$ of each node $u^i$ of $G^i$ is a random value in the range $[0, \Theta^i(u^i)]$. Node $u^i$ will be influenced when $\sum_{v^i \in N^-_{u^i}, v \in A } {w^i(v^i, u^i)} \geq \theta^i(u^i)$  
    \item \textit{Independent Cascading model.} In this model, there are only edge weights representing the influence between users. Once node $u^i$ of $G^i$ is influenced, it has a single chance to influence its neighbor $v^i \in N^+(u^i)$ with probability $w^i(u^i, v^i)$. 
\end{itemize}

For both models, we use the same approach of using gateway vertices, representative vertices and the synchronization edges between gateway vertices and their representative vertices. The weight of edge $(u^i, u^j)$, $0 \leq i \neq j \leq k$ will be $\Theta(u^j)$ for Stochastic Threshold model and 1 for Independent Cascading model. Once $u^i$ is influenced, $u^j$ will be influenced with probability 1 in the next time step. The proof for the equivalence of the coupling scheme is similar to ones for LT-model.

\section{Lossy coupling schemes}
\label{se:lossy}
In the preceding coupling scheme for LT-model, a complicated coupled network is produced with  large numbers of auxiliary vertices and edges. It is ideal to have a coupled network which only contain users as vertices. This network provides a compact view of the relationship between users crossing the whole system of networks. 
The loss of the information is unavoidable when we try to represent the information of multiplex networks in a compact single network.
The goal is to design a scheme that minimizes the loss as much as possible i.e. the solution for the problem in the coupled network is very close to one in the original system. Next, we present these schemes based on the following key observations.

\textit{Observation 1.} User $u$ will be activated if there exists $i$ such that:
$\sum_{v \in N^{i-}_u \cap A} {w^i(v, u)} \geq \theta^i(u)$
where $A$ is the set of active users. 
We can relax the condition to activate $u$ with positive parameters $\alpha^1(u)$, $\alpha^2(u)$, $\ldots$, $\alpha^k(u)$ as follows:

\begin{equation}
\label{coupled_condition}
\sum_{i = 1}^k \Big(\alpha^i(u) \sum_{v \in N^{i-}_u \cap A } {w^i(v, u)}\Big) \geq \sum_{i = 1}^k \alpha^i(u)\theta^i(u)
\end{equation}

Note that sometimes the condition to activate $u$ is met, but the condition (\ref{coupled_condition}) still needs more influence from $u$'s friends to satisfy. The more this  need for extra influence is, the looser condition (\ref{coupled_condition}) is. We can reduce this redundancy by increasing the value of $\alpha^i(u)$ proportional to the value of $\sum_{v \in N^{i-}_u \cap A } {w^i(v, u)} - \theta^i(u)$. In the special case, if $\sum_{v \in N^{i-}_u \cap A } {w^i(v, u)} > \theta^i(u)$ and we choose $\alpha^i(u) \gg \alpha^j(u)$, $\forall j \neq i$, then there is no redundancy. Unfortunately, we do not know before hand in which network user $u$ will be activated, so we can only choose parameters heuristically.

\textit{Observation 2.} When user $u$ participates in multiple networks, it is easier to influence $u$ in some network than the others. The following simple case illustrate such situation. Suppose that we have two networks. In network 1, $\theta^1(u) = 0.1$ and $u$  has 8 in-neighbors, each neighbor $v$ influences $u$ with $w^1(v, u) = 0.1$. In network 2, $\theta^2(u) = 0.7$ and $u$  has 8 in-neighbors, each neighbor $v$ influences $u$ with $w^2(v, u) = 0.1$. The number of active neighbors to activate $u$ is 1 and 7 in network 1 and 2, respectively. 

\emph{Easiness.} Intuitively, we can say that $u$ is easier to be influenced in the first network. We quantify the \textit{easiness} $\epsilon^i(u)$ that $u$ is influenced in network $i$ as the ratio between the total influence from friends and the threshold to be influenced: $\epsilon^i(u) = \frac{\sum_{v \in N^{i-}_u} w^i(v, u)}{\theta^i(u)}$. 
We can use the easiness of a user in networks as the parameters of the condition \ref{coupled_condition}. 

\begin{figure}%
\centering
\includegraphics[width=0.3\columnwidth]{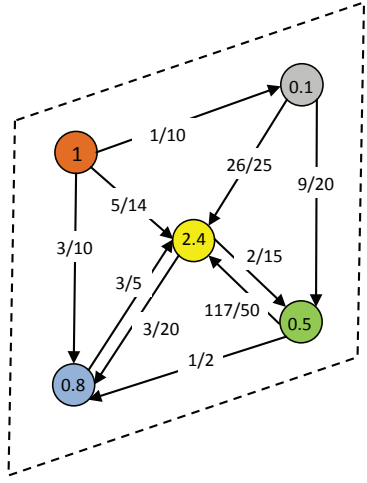}%
\caption{Lossy coupled network using easiness parameters.}%
\label{fig:lossy}%
\vspace{-17pt}
\end{figure}

Based on above observations, we couple multiplex networks into one using parameters $\{\alpha^i(u)\}$. The vertex set is the set of users $V = \{u_1, u_2, \ldots, u_n\}$. The threshold of vertex $u$ is set to 
$\theta(u) = \sum_{i = 1}^k \alpha^i(u)\theta^i(u)$

The weight of the edge $(v, u)$ is: $w(v, u) = \sum_{i = 1}^k {\alpha^i(u) w^i(v, u)}$
where $w^i(v, u) = 0$ if there is no edge from $v$ to $u$ in $i^{th}$ network. 

Then the set of edges is $E = \{(v, u) | w(v, u) > 0\}$. Fig. \ref{fig:lossy} illustrates the loosy coupled network of networks in Fig. \ref{fig:lossless_scheme_example}.

Besides easiness, other metrics can be used for the same purpose. We enumerate here some other metrics.

\textit{Involvement.} If a user is surrounded by a group of friends who have high influence on each other, he tends to be influenced. When a few of his friends are influenced, the whole group involving him is likely to be influenced. We estimate \textit{involvement} of a node $v$ in a network $G^i$ by measuring how strongly the 1-hop neighborhood $v$ is connected and to what extent influence can propagate from one node to another in the 1-hop neighborhood. Formally we can define \textit{involvement} of a node $v$ in network $G^i$ as:
$\sigma_{v}^{i}= \sum_{x,y \in  N^{i}_v \cup \{v\} } \frac{{w^{i}(x,y)}} {\theta_{y}^{i}}$
where $N^{i}_v = N^{i+}_v \cup N^{i-}_v$ is the set of all neighbors of $v$ (both in-coming and out-going).

\textit{Average.} All parameters have the same value $\alpha^i(u) = 1$

Next we show the relationship between the solution for LCI in the lossy coupled network and the original system of networks. As discussed in the above observations, if the propagation process starts from the same set of users in $G^{1 \ldots k}$ and the coupled network $G$, then the active state of a user in $G$ implies its active state in $G^{1 \ldots k}$. It means that if the set of users $S$ activates $\beta$ fraction of users in $G$, it also activates at least $\beta$ fraction of users in $G^{1\ldots k}$. It implies that if a seed set is a feasible solution in $G$, it is also a feasible solution in $G^{1\ldots k}$. Thus we have the following result.
\begin{theorem}
When the lossy coupling scheme is used, if the set of users $S$ activates $\beta$ fraction of users in $G$, then it activates at least $\beta$ fraction of users in $G^{1 \ldots k}$.
\end{theorem}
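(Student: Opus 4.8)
The plan is to prove a stronger, hop-by-hop statement from which the theorem follows immediately: starting both diffusion processes from the same seed set $S$, the set of active vertices in the lossy coupled network $G$ after $t$ hops is always contained in the set of active users in $G^{1\ldots k}$ after $t$ hops. Write $A_G(t)$ for the former and $A_{G^{1\ldots k}}(t)$ for the latter; since the vertex set of $G$ is exactly the user set $U$, these live in the same universe and a containment $A_G(t) \subseteq A_{G^{1\ldots k}}(t)$ compares fractions directly. I would establish this containment by induction on $t$, the base case being trivial since $A_G(0) = A_{G^{1\ldots k}}(0) = S$, and recalling that active sets are monotone nondecreasing in time.

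The heart of the argument is a pointwise implication between the two activation rules. Fix a user $u$ and an active set $A$, and suppose $u$ meets the coupled activation rule in $G$, i.e. $\sum_{v \in N^-_u \cap A} w(v,u) \geq \theta(u)$. Expanding $w$ and $\theta$ by their definitions, this is exactly condition (\ref{coupled_condition}),
\[
\sum_{i=1}^k \alpha^i(u) \sum_{v \in N^{i-}_u \cap A} w^i(v,u) \;\geq\; \sum_{i=1}^k \alpha^i(u)\theta^i(u),
\]
and I claim it forces the single-network rule to fire in at least one layer, i.e. there exists $i$ with $\sum_{v \in N^{i-}_u \cap A} w^i(v,u) \geq \theta^i(u)$. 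I would prove this by contraposition: if the single-network inequality failed in every layer, then $\alpha^i(u)\sum_{v \in N^{i-}_u \cap A} w^i(v,u) < \alpha^i(u)\theta^i(u)$ for each $i$ because every $\alpha^i(u) > 0$, and summing these strict inequalities over $i$ would contradict (\ref{coupled_condition}). The strict positivity of the parameters $\alpha^i(u)$ is precisely what licenses this direction.

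With this lemma in hand the inductive step is routine. Assuming $A_G(t) \subseteq A_{G^{1\ldots k}}(t)$, any $u$ activated in $G$ at hop $t+1$ satisfies the coupled condition relative to $A_G(t)$, hence by the lemma satisfies the single-network rule in some layer $i$ relative to $A_G(t)$; and because every $w^i$ is nonnegative, enlarging the active set only increases the influence sum, so the same layer-$i$ inequality holds relative to the superset $A_{G^{1\ldots k}}(t)$, making $u$ active in $G^{1\ldots k}$ by hop $t+1$. This yields $A_G(t+1) \subseteq A_{G^{1\ldots k}}(t+1)$, and taking $t=d$ gives $|A_G(d)| \leq |A_{G^{1\ldots k}}(d)|$, so if $S$ activates a $\beta$ fraction in $G$ it activates at least a $\beta$ fraction in $G^{1\ldots k}$. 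I expect the only delicate point to be the pointwise lemma and its reliance on $\alpha^i(u)>0$; the monotonicity-in-$A$ observation and the induction bookkeeping are straightforward once that implication is settled.
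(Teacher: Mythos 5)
Your proposal is correct and takes essentially the same route as the paper: your pointwise lemma is exactly the paper's Observation~1 (condition~(\ref{coupled_condition}) with strictly positive $\alpha^i(u)$ forces, by contraposition and summation, the activation condition in some layer $i$), and the paper's remark that ``the active state of a user in $G$ implies its active state in $G^{1 \ldots k}$'' is precisely the hop-by-hop containment $A_G(t) \subseteq A_{G^{1\ldots k}}(t)$ you establish by induction. The only difference is one of rigor, not substance --- the paper states the implication informally and omits the induction and the monotonicity-in-$A$ step, both of which you correctly supply.
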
 

\section{Algorithms}
\label{se:algorithm}
In this section, we describe a greedy algorithm and its improvement in terms of scalability in large networks. In the state of the art work, Dinh et al. \cite{Dinh2013ToN} only solved the problem in a special case where the threshold is uniform and the required fraction of active nodes is the same for all nodes.

\subsection{Improved greedy algorithm}

\begin{algorithm}[tbh]
\caption{Improved Greedy}
\label{alg:improved_greedy_MIP}
	\begin{algorithmic}
	\State \hspace{-0.1in} \textbf{Input:} A system of networks $G^{1 \ldots k}$, fraction $\beta$, $T$, $R$.
	\State \hspace{-0.1in} \textbf{Output:} A small seeding set $S$
	\State $G \gets \textit{ The coupled network of } G^{1 \ldots k}$		
		\State $C \gets \textit{ Set of user vertices}$
		\State $I \gets \emptyset$, $Counter \gets 0$
		\State Initialize a heap: $H \gets \emptyset$ 
		\For {$u \in C$}
			\State $H.push((u, f_{\emptyset}(u)))$
		\EndFor
		
		\While {Number of active vertices $\leq \beta|V|$}
			\State $Counter \gets Counter + 1$
			\If {$Counter \text{ \% } R == 0$}
				\State Update key values of all elements in $H$
			\Else
				\State $A \gets \emptyset$
				\For {$i = 1 \text{ to } T$}
					\State $(u, f(u)) \gets H.extract{-}max()$ 
				\EndFor
				\For {$u \in A$}
					\State $H.push((u, f_I(u))) $ 
				\EndFor
			\EndIf
			\State $(u, f(u)) \gets H.extract{-}max()$
			\State $I \gets I \cup \{u\}$			
		\EndWhile
		\State $S \gets \text{ corresponded users } G^{1 \ldots k} \text{ of nodes in } I$
		\State \textbf{Return} $S$
	\end{algorithmic}
\end{algorithm}

The bottle neck of the native greedy algorithm is to identify the best node to be selected in each iteration, thus we focus on reducing the evaluating the computational cost of this step while maintaining the same quality of selected nodes. We notice that the marginal gain function $f_I(\cdot)$ is recomputed for all unselected nodes and it does not change much after a single iteration. Since we only select the best one, which indicates that nodes with higher marginal gain in previous round are necessary to be reevaluated. Therefore, we use a max heap to store the marginal gain and extra the top one to reevaluate it, if it is not of largest marginal value, it will be pushed back. Since the time to extract/push an element to the heap is $O(\log n)$, the total computation cost for each iteration is $O(T(m + n) + T\log n) = O (T(m+n))$. Normally, $T$ is much smaller than $n$, so the running time is improved significantly. In addition, due to the property of the Linear Threshold model, the required influence (remained threshold value after subtracting the influence of activate neighbors) to activate a node is decreasing when the seed set grows up. When a large number of nodes is selected, there are many nodes are very easy to be activated. Thus the marginal gain of a node can accumulate to a large value. If we just apply the proposed strategy, we will never evaluate these nodes again. Therefore, we need to do exhaustively reevaluation periodically. Combine two strategies together, we present the idea of using heavy and light iterations alternatively. In heavy iteration, we will update $f_I(\cdot)$ of all unselected nodes while only top $T$ nodes are reevaluated in light iteration. Since we do not want too many heavy iterations, we only use one such iteration after every $R$ iterations. With this implementation, the running time is reduced much while the quality of the solution only fluctuates infinitesimal. The improved Greedy is described in Algorithm \ref{alg:improved_greedy_MIP}.

This algorithm will terminate when the number of influenced users is larger then the required fraction of total users. The complexity of this algorithm is $O((m+n)\cdot nd)$ in the worst case scenario, however, after applying the above discussed update techniques, the running time can be improved up to 700 times faster than the native greedy from experimental results.

\subsection{Comparing to Optimal Seeding}
In this section, we evaluate the performance of proposed algorithm with different coupling schemes to the optimal solution. We formulate the LCI problem to a 0-1 Integer Linear Programming (ILP) problem as follows.

{\begin{align}
    \mbox{minimize}    & \displaystyle\sum_{v \in V}  x_v^0  \nonumber \\ 
    \mbox{subject to}  & \displaystyle\sum_{v \in V} x_v^d \geq \beta |V|      \nonumber \\
            & \displaystyle\sum_{w \in N(v)} x_u^{i-1}w_{uv} +  \theta_v\cdot x_v^{i-1} \geq \theta_v \cdot x_v^i \nonumber \\
                       &   \makebox[0.97in]{ }\forall v \in V, i=1..d \nonumber \\
                & x_v^i     \geq x_v^{i-1}\makebox[0.39in]{ }\forall v \in V, i=1..d \nonumber \\
                & x_v^i \in \{0, 1\}\makebox[0.35in]{ }  \forall v \in V, i=0..d \nonumber
\end{align}
}
where $x_v^i = 1$ if $v$ is active in round $i$, otherwise, $x_v^i = 0$.

Since solving IP is NP-hard, we can not run the IP on large networks. Moreover, we have to run the IP on the coupled network with clique lossless coupling scheme, where there will be additional nodes and edges created, and the propagation hops need to be doubled (Theorem 2). Therefore, to evaluate the performance of our algorithm, we compare the result with small size synthesize networks. For generating networks, 50 nodes are randomly chosen from a 100 users base, and the probability of connecting each pair of nodes is $p=0.04$, which yields a coupled network of 300 users and with an expected average degree 2. Fig.4(a) shows the obtained seed size with influence fraction from 0.2 to 0.8 under all coupling schemes. And we also evaluate the impact of setting different propagation hops on seed size in Fig.4(b) with influence fraction $\beta=0.4$. 

\begin{figure}[h]	
  \centering
	\subfigure[] {
	\includegraphics[width=0.4\columnwidth]{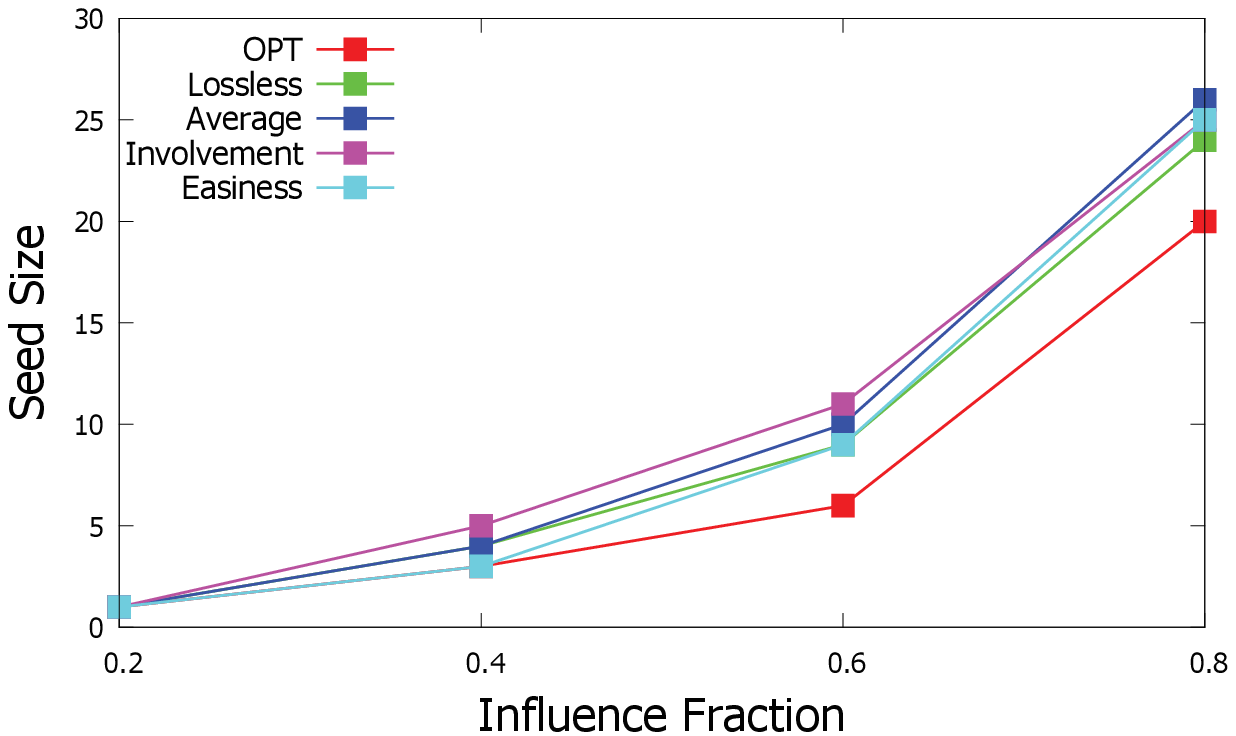}
	}
	\subfigure []{
	\includegraphics[width=0.4\columnwidth]{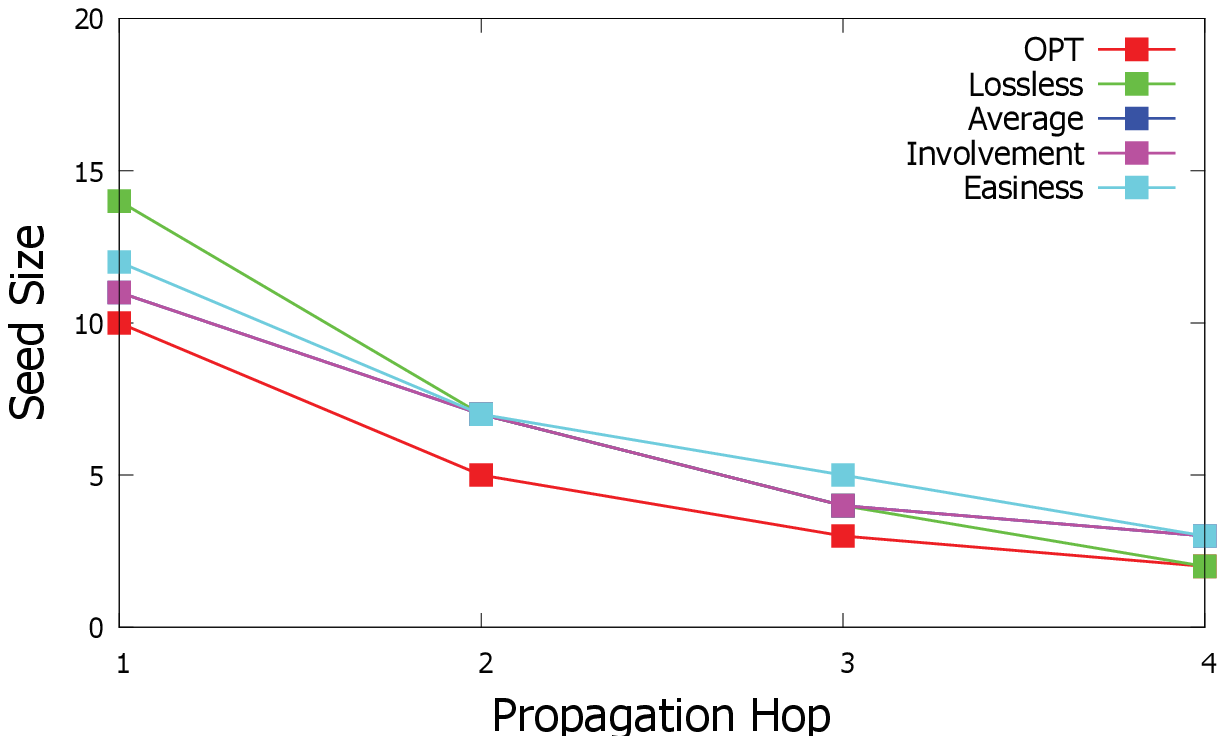}
	}	
\caption{Seed size on synthesize network.}
\label{fig:optimal}
\end{figure}

The optimal seeding along with the results of the improved greedy are shown in Fig.4. As can be seen in Fig. 4(a), the seeding sizes obtained from the proposed algorithm are close to the optimal solution while varying the influenced fraction $\beta$. The same phenomenon is also shown by varying the number of propagation hops in Fig.4(b). Especially, when the number of hops is relatively larger, the result is only one or two more than the optimal solution.

\section{Experiments}
\label{se:experiment}

In this section, we show the experimental results to compare the proposed coupling schemes and utilize these coupling schemes to analyze the influence diffusion in multiplex networks. 
First, we compare lossless and lossy coupling schemes to measure the trade-off between the running time and the quality of solutions.  In particular, for those different kinds of lossless coupling methods, all of them can preserve complete information of all networks. As a result, the quality of seeds are the same, the only difference would be the running time which have been theoretically proved in Proposition 1 and Proposition 2. Therefore, we only chose the clique coupling scheme to be evaluated. Second, we investigate the relationship between networks in the information diffusion to address the following questions: (1) What is the role of overlapping users in diffusing the information?  (2) What do we miss when considering each network separately? (3) How and to what extent does the diffusion on one network provide a burst of information in other networks?

\subsection{Datasets}
\emph{Real networks}. We perform experiments on two datasets:
\begin{itemize}
  \item \emph{Foursquare} (\emph{FSQ}) and \emph{Twitter} networks \cite{yilinCIKM2012} 
  \item Co-author networks in the area of Condensed Matter(CM) \cite{newman2001structure},  High-Energy Theory(Het) \cite{newman2001structure}, and  Network Science (NetS) \cite{newman2006finding}.
\end{itemize} 

The statistics of those networks are described in Table \ref{tab:DataSet}.  The number of overlapping users in the first dataset FSQ-Twitter is 4100 \cite{yilinCIKM2012}. We examine the in and out-degree of overlapping nodes. For the second dataset, we match overlapping users based on authors' names. The numbers of overlapping users of the network pairs  CM-Het, CM-NetS, and Het-NetS are 2860, 517, and 90, respectively. 
While the edge weights are provided for co-author networks, only the topology is available for Twitter and Foursquare networks. Thus to assign the weight of each edge, we adopt the method in \cite{Kempe2003}, where each edge weight is randomly picked from 0 to 1 and then normalize it so that the total weight of in-coming edges is sum up to 1 for each node. This is suitable since the influence of user $u$ on user $v$ tends to be small if $v$ is under the influence of many friends. Finally, we also adopt the assignment of threshold in \cite{Kempe2003} where all thresholds are randomly chosen from 0 to 1.

\begin{table}[h]
\caption{Datasets description}
\centering
\begin{tabular}{l c c c}
\hline\hline
\textbf{Networks}        & \textbf{\#Nodes} & \textbf{\#Edges}    & \textbf{Avg. Degree} \vspace{3pt} \\
\hline  
Twitter    & 48277 & 16304712      & 289.7 \\
        
FSQ & 44992 & 1664402  &       35.99 \\
        
CM & 40420 & 175692        & 8.69 \\
				
Het & 8360 & 15751      & 1.88 \\
		
NetS & 1588 & 2742      & 1.73 \\
\hline
\end{tabular}	
\label{tab:DataSet}		
\end{table}

\emph{Synthesized networks}. We also use synthesized networks generated by Erdos-Renyi random network model \cite{erdos1960evolution} to test on networks with controlled parameters. There are two networks with 10000 nodes which are formed by randomly connecting each pair of nodes with probability $p_1 = 0.0008$ and $p_2 = 0.006$. The average degrees, 8 and 60, reflect the diversity of network densities in reality. Then, we select randomly $f$ fraction of nodes in two networks as overlapping nodes.  We shall refer to $0\leq f\leq 1$ as the \emph{overlapping fraction}. The edge weights and node thresholds are assigned as Twitter.

\emph{Setup.} We ran all our experiments on a desktop with an Intel(R) Xeon(R) W350 CPU and 12 GB RAM. The number of hops is $d = 4$ and the \emph{influenced fraction} $\beta = 0.8$, unless otherwise mentioned.

\subsection{Comparison of coupling schemes}
We evaluate the impact of the coupling schemes on the running time and the solution quality of the greedy algorithm to solve the LCI problem.

\emph{Solution quality.} 
As shown in Figs. \ref{fig:cc_01} and \ref{fig:cc_03}, the greedy algorithm provides larger seed sets but runs faster in lossy coupled networks than lossless coupled networks. 
In both Twitter-FSQ and the co-author networks, the seed size is smallest when the lossless coupling scheme is used. 
It is as expected since the lossless coupling scheme reserves all the influence information which is exploited later to solve LCI. 
However, the seed sizes are only a bit larger using the lossy coupling schemes.
In the lossy coupling schemes, the information is only lost at overlapping users which occupy a small fraction the total number of users (roughly 5\% in FSQ-Twitter and 7\% in co-author networks). 
Thus, the impact of the lossy coupling schemes on the solution quality is small especially when seed sets are large to influence a large fraction of users.

A closer examination reveals the relative effectiveness of the coupling methods on the seed size. That is when the seed size is significantly small, the lossess coupling outperforms all the lossy methods. For example, when the overlapping fraction $f=0.8$, the solution using the lossless coupling is roughly 55\% of that in the solution using the (lossy) Easiness, and the solution using Easiness is about 15\% smaller than the other two lossy methods (Fig. \ref{fig:cc_7}).

\emph{Running time.} The greedy algorithm runs much faster in the lossy coupled networks than the lossless ones in general. 
As shown in Figs. \ref{fig:cc_02} and \ref{fig:cc_04}, using the lossy coupling reduces the running times by a factor of 2 in FSQ-Twitter and a factor 4 in the co-author networks compared to the lossless coupling. 
The major disadvantages of the lossless coupling scheme are the redundant nodes and edges. 
Therefore, the lossy coupling schemes works better on networks that are sparse and the number of overlapping users is small. 

\begin{figure}[!h]
  \centering
	\subfigure[\scriptsize{Seed size - co-author networks}]{
	\label{fig:cc_01}	
  \includegraphics[width=0.4\columnwidth]{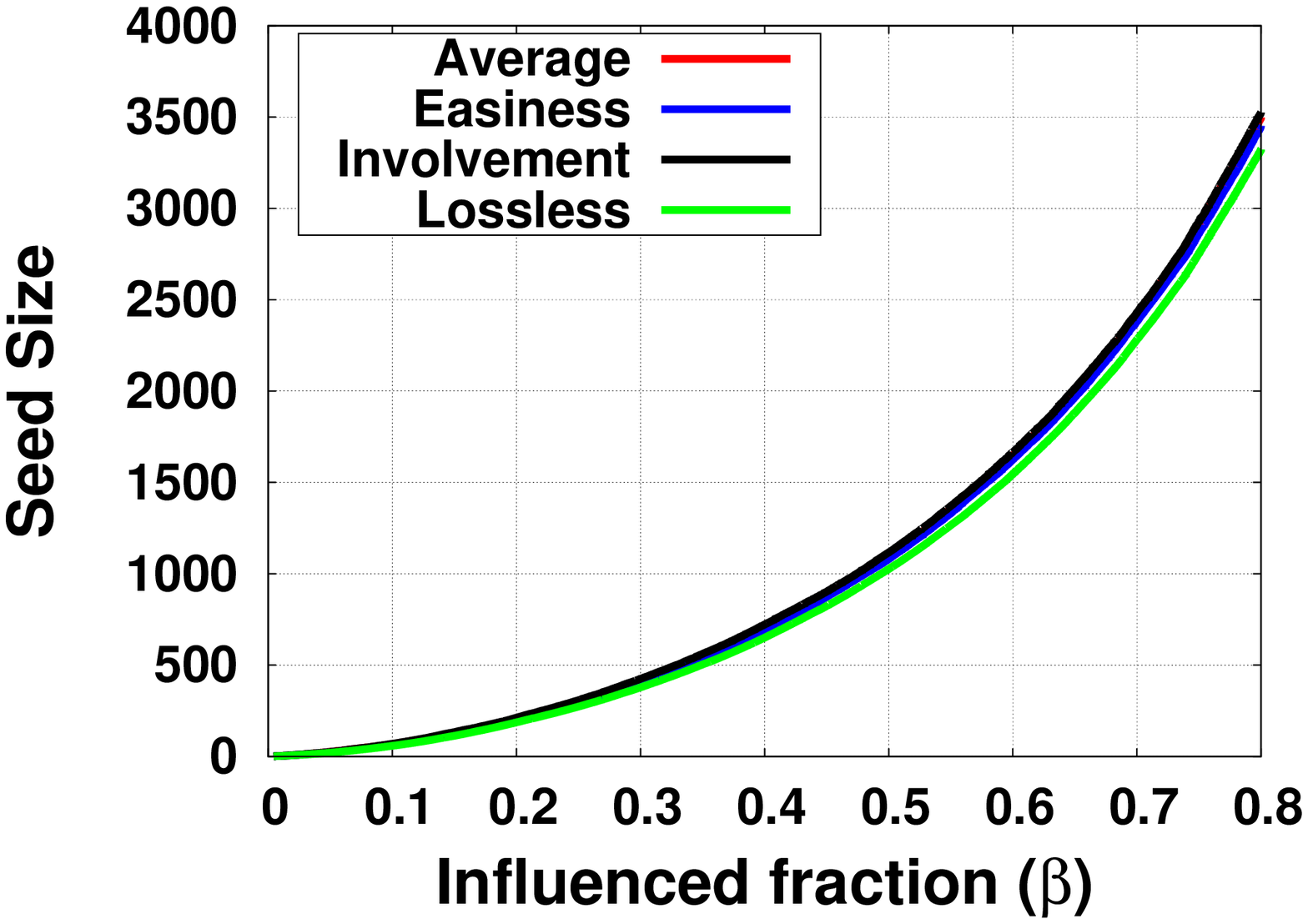}
	}
	\subfigure[\scriptsize{Seed size - FSQ-Twitter}]{
  \includegraphics[width=0.4\columnwidth]{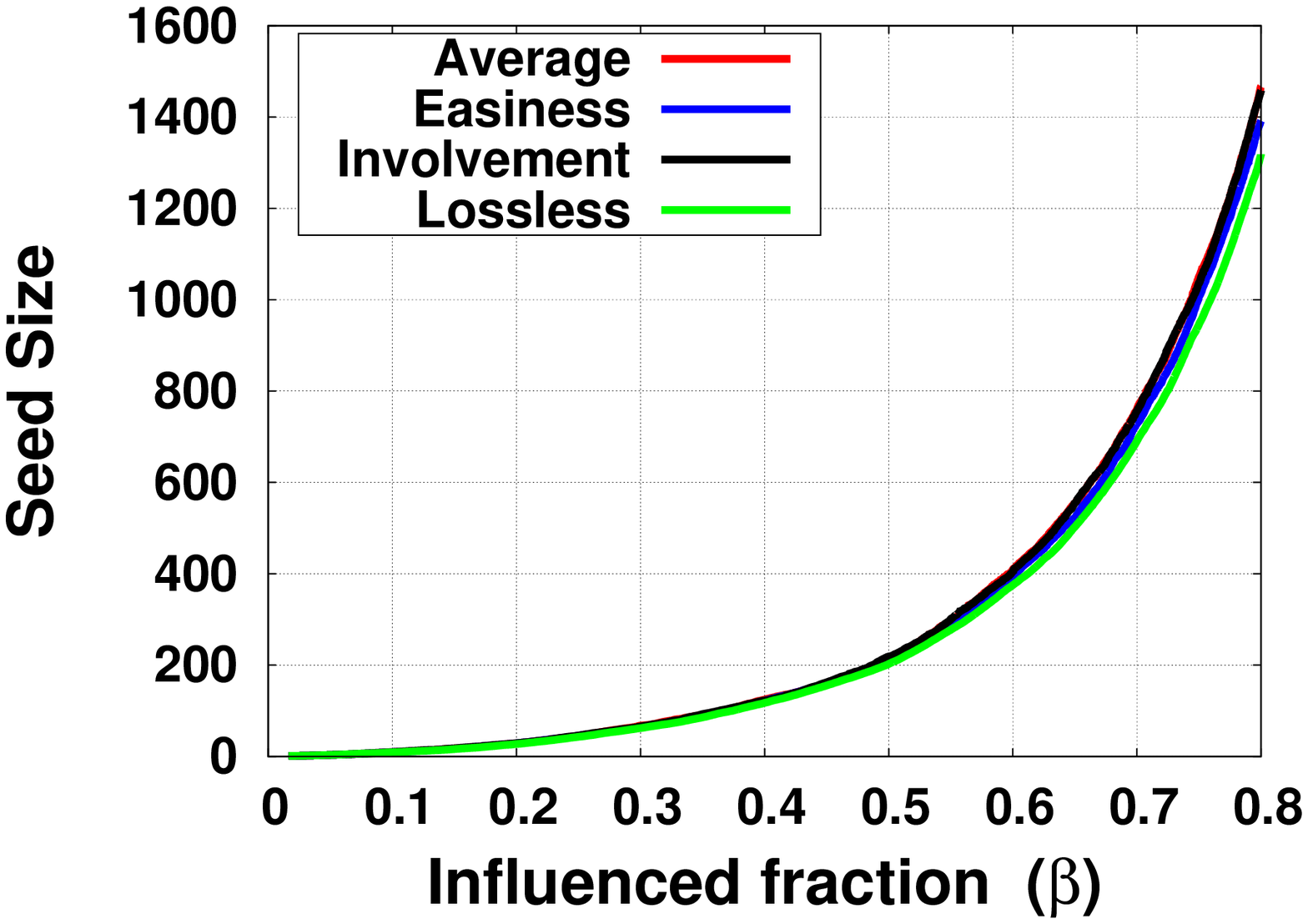}
  \label{fig:cc_03}
	}
	\subfigure[{\scriptsize Running time - co-author networks}]{
	\includegraphics[width=0.4\columnwidth]{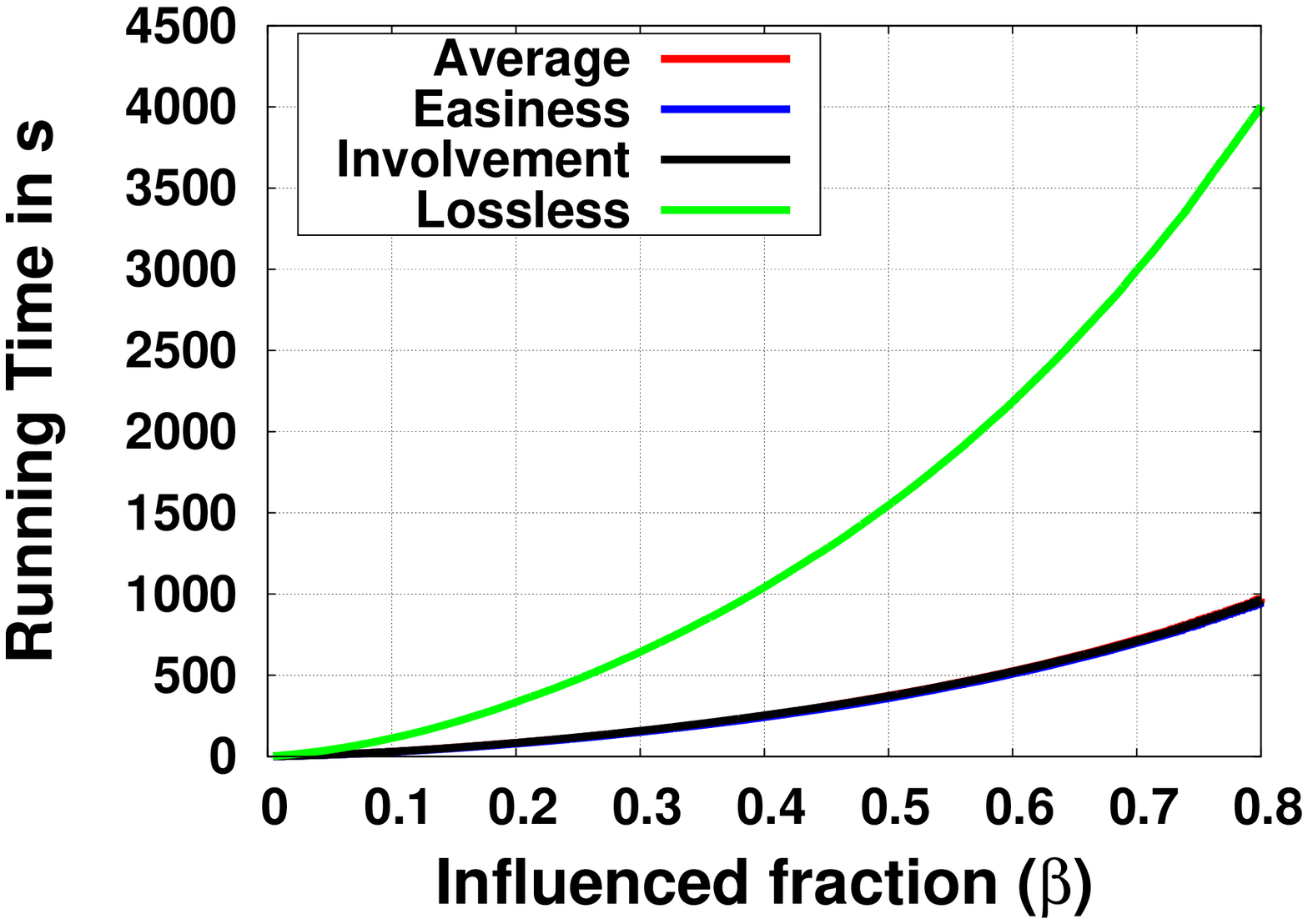}
	\label{fig:cc_02}
	}
	\subfigure[{\scriptsize Running time - FSQ-Twitter}]{
	\includegraphics[width=0.4\columnwidth]{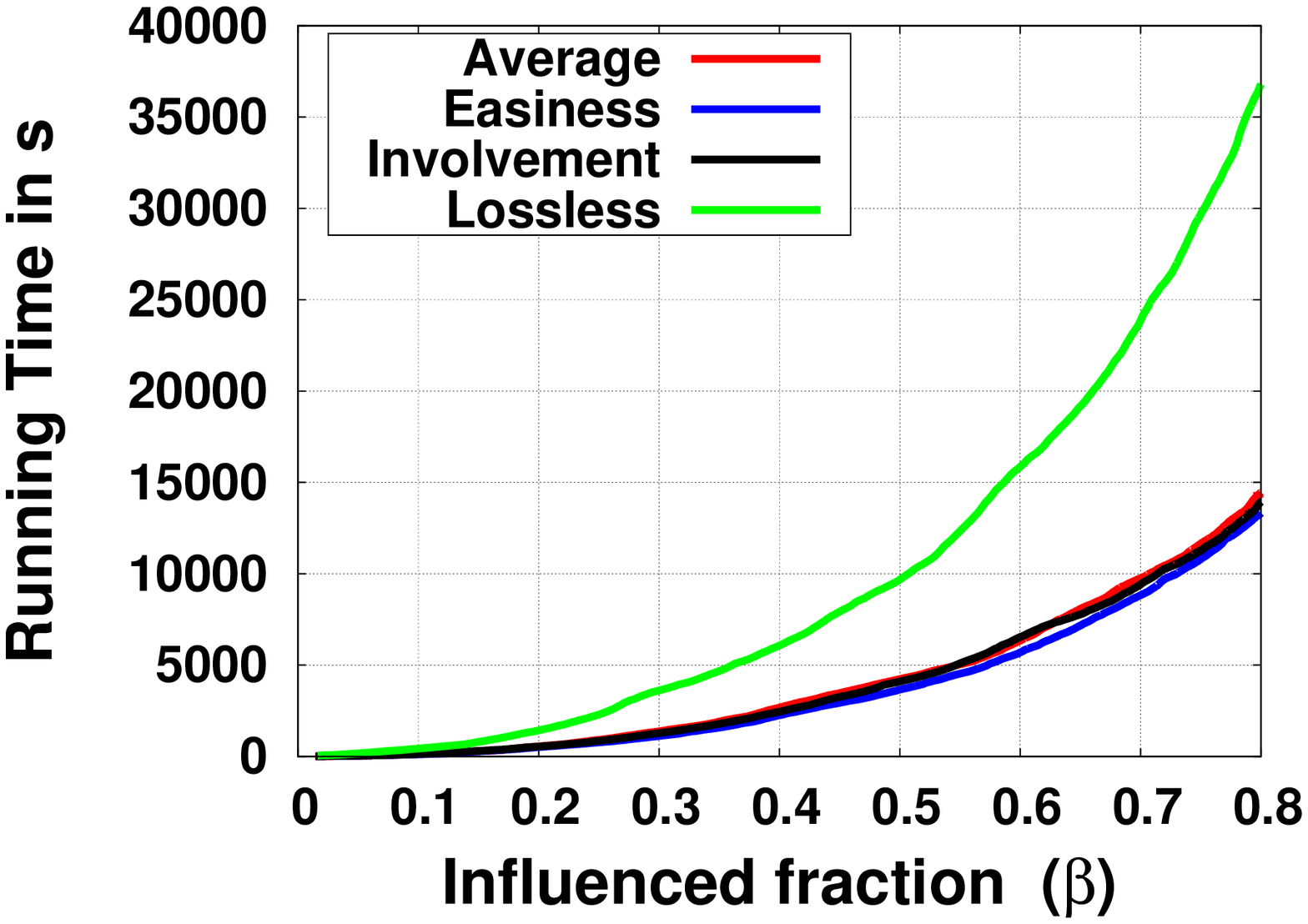}
	\label{fig:coupling_compare_run}
	\label{fig:cc_04}
	}	
\caption{Impact of coupling schemes on finding the minimum seed set}
\label{fig:coupling_compare}
\end{figure}

However, in some other cases, the lossless coupling scheme is more efficient. As shown in Fig. \ref{fig:cc_8}, the running time in the lossless coupled networks is larger in the beginning, but gradually reduces down and beats other methods at $f = 0.4$. The larger $f$ is, the larger the ratio between seed size in the lossless and lossy coupled networks is. As the running time depends on the seed size, thus it reduces faster in the lossless coupled network with larger overlapping ratio.

\emph{Overall}, the lossless coupling scheme returns solution with higher quality, especially when the seed set is small. However, if the constraint of running time and the memory are of priority, the lossy Easiness coupling scheme offers an attractive alternative.     

\subsection{Advantages of using coupled networks.} 
To understand the benefit of taking consideration of overlapping users and coupled network, in this part, we are going to compare the seed size with/out using coupled network. In particular, we do two comparisons on: 1) influencing a fraction $\beta$ of the nodes in \emph{all networks} by selecting seeds from each network and taking the union to compare with seeds achieved from lossless coupling scheme; 2) influencing a fraction $\beta$ of the nodes in \emph{a particular network} by only choosing seeds from that network compare to the seeds obtained from lossless coupling scheme.  

\begin{figure}[H]
\centering
\subfigure[Seed size] {
  \includegraphics[width=0.4\columnwidth]{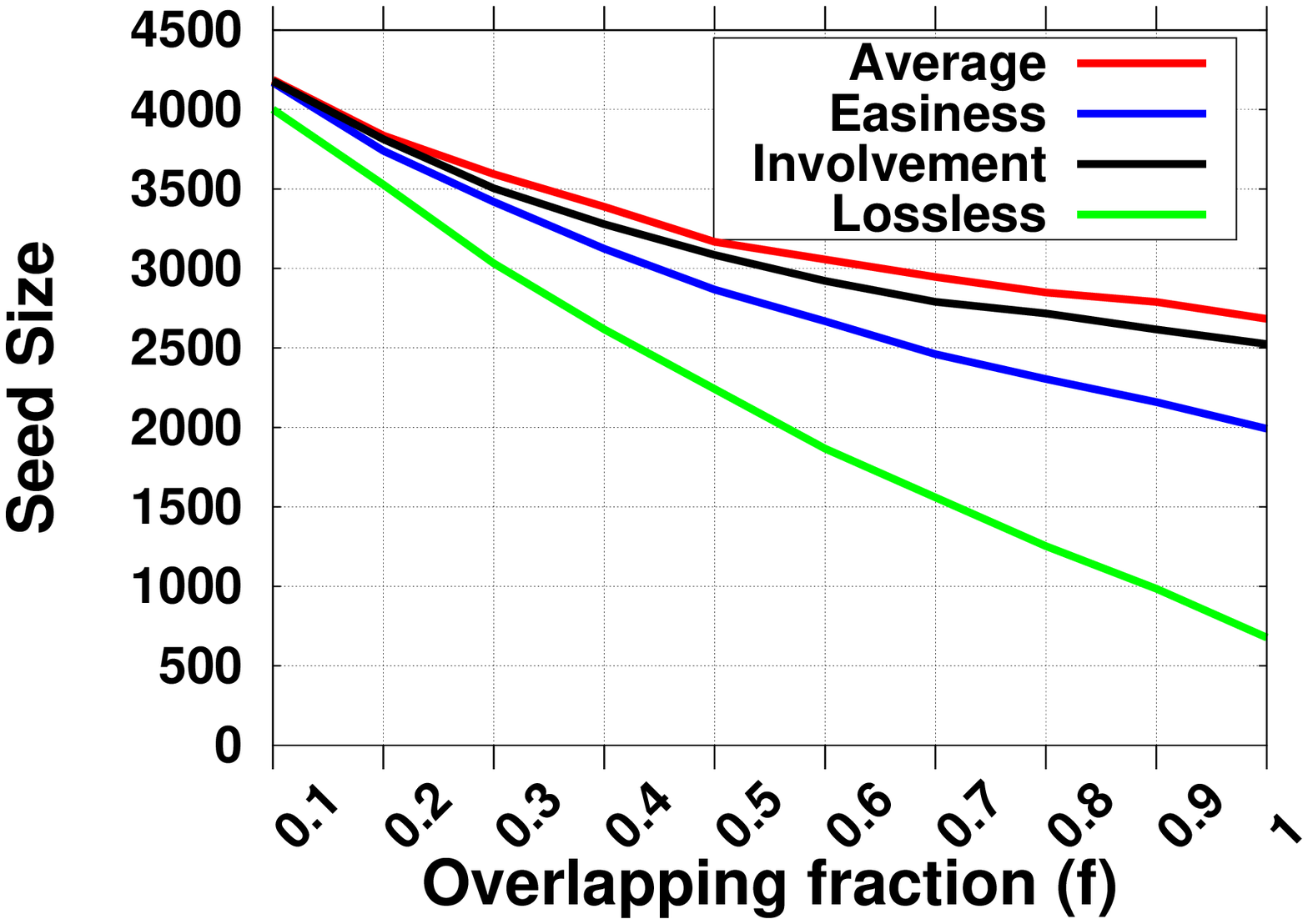}
  \label{fig:cc_7}
	}
	\subfigure[Running time]{
	\includegraphics[width=0.4\columnwidth]{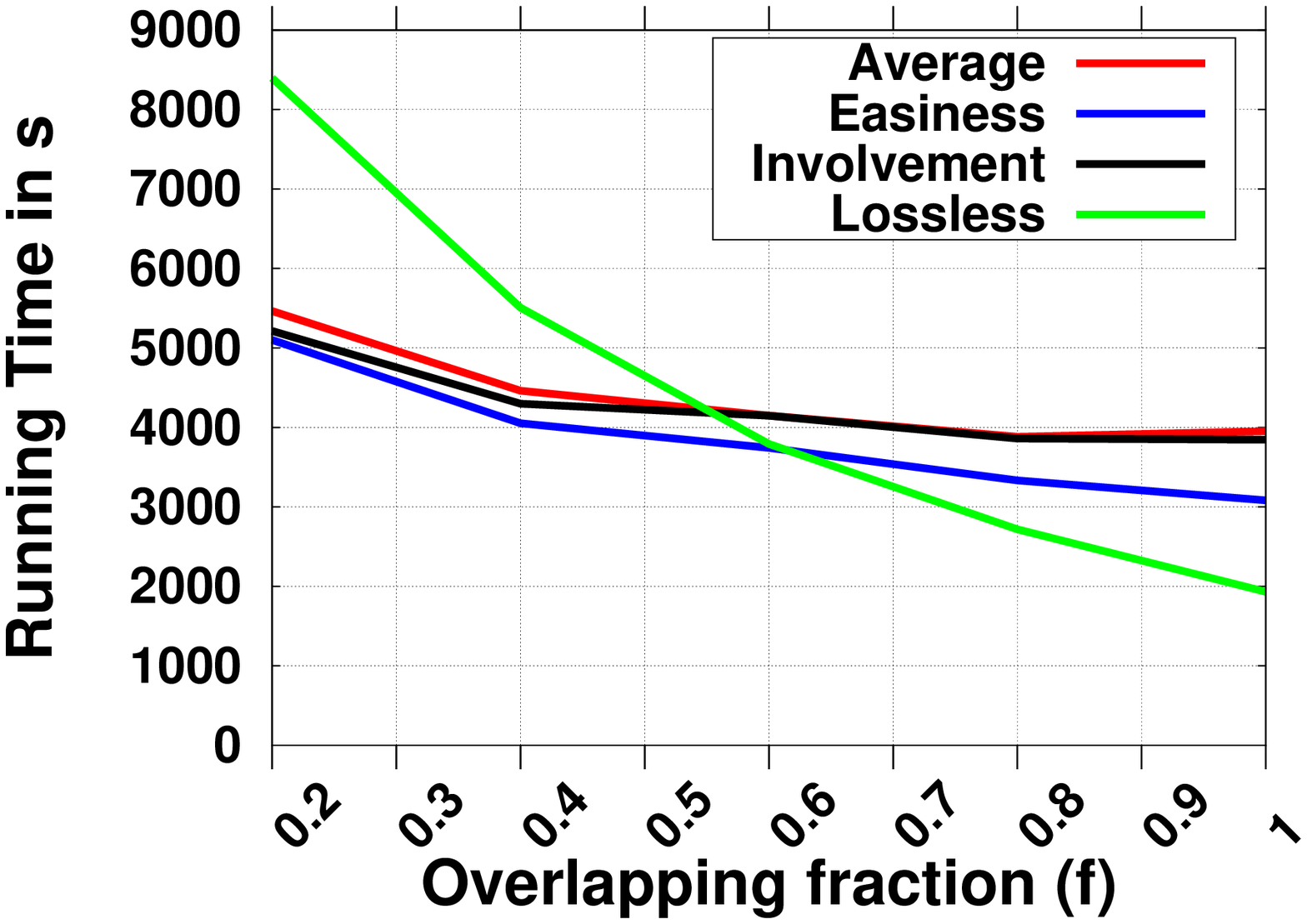}
	\label{fig:cc_8}
	}	
\caption{Comparing coupling schemes in the synthesized networks}
\label{fig:coupling_compare_overlap}
\end{figure}
 
The results for the first scenario are shown in Fig. \ref{fig:Individual_VS_Coupled}. The seed obtained by the lossless coupling method outperforms other methods. The size of the union set is approximately 30\% and 47\% larger than lossless coupling method in co-author and FSQ-Twitter, respectively. This shows that overlapping users do propagate information through several networks and thus effectively help reduce the overall seed size. 

\begin{figure}[h]%
  \centering
	\subfigure[Co-author networks] {
	\includegraphics[width=0.4\columnwidth]{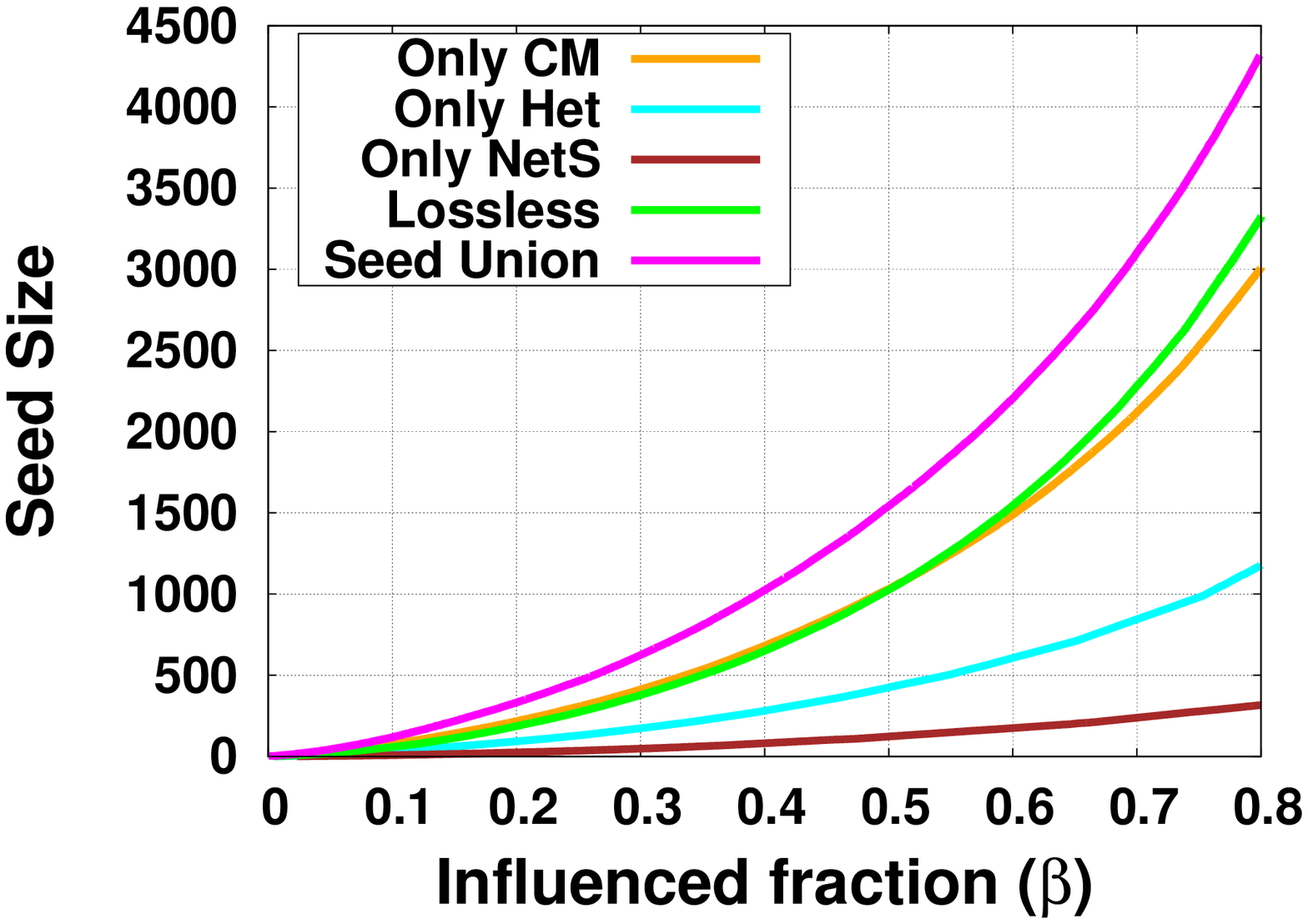}
	}
	\subfigure[FSQ and Twitter] {
	\includegraphics[width=0.4\columnwidth]{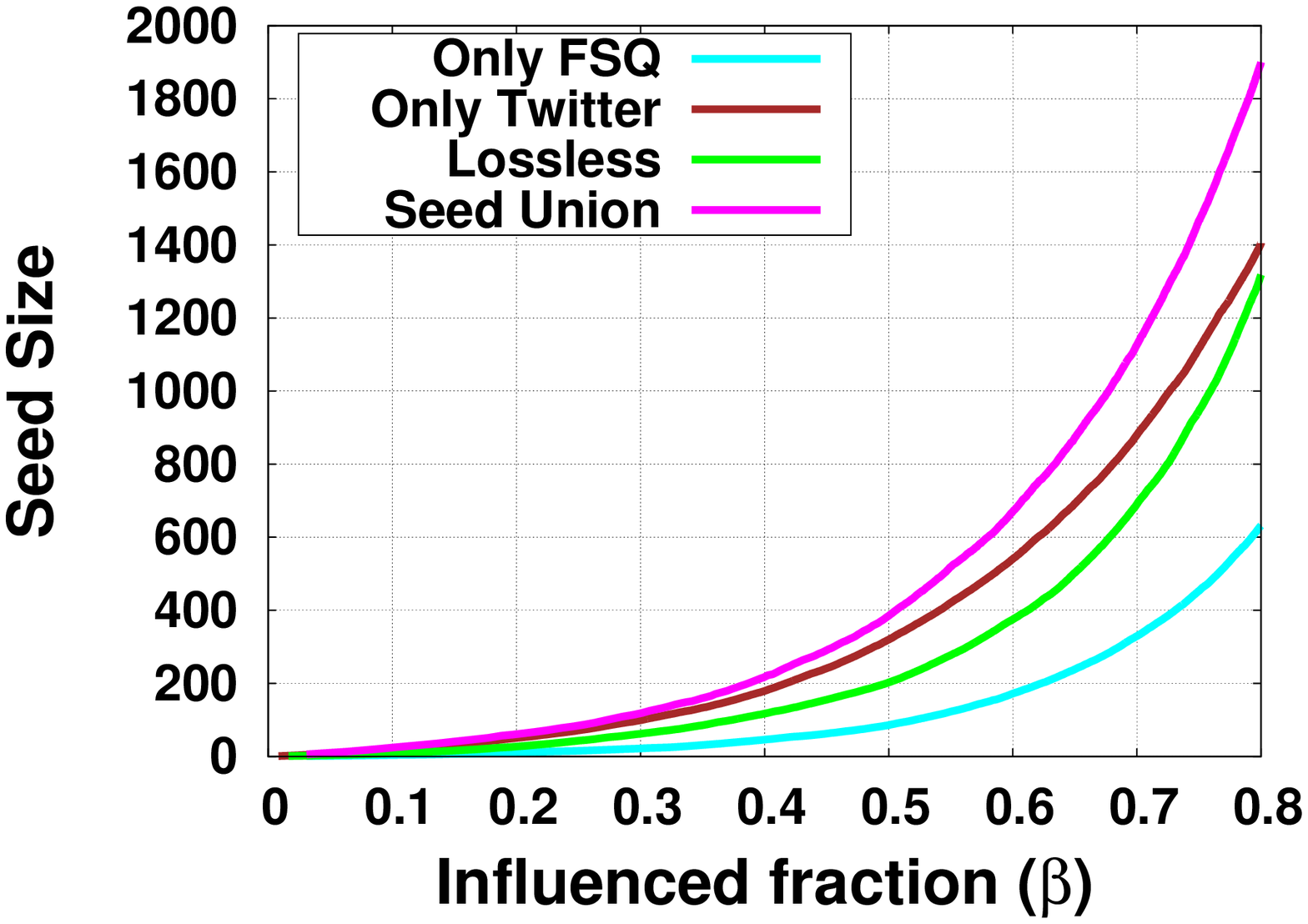}
	}\\
	\subfigure[Co-author networks] {
	\includegraphics[width=0.4\columnwidth]{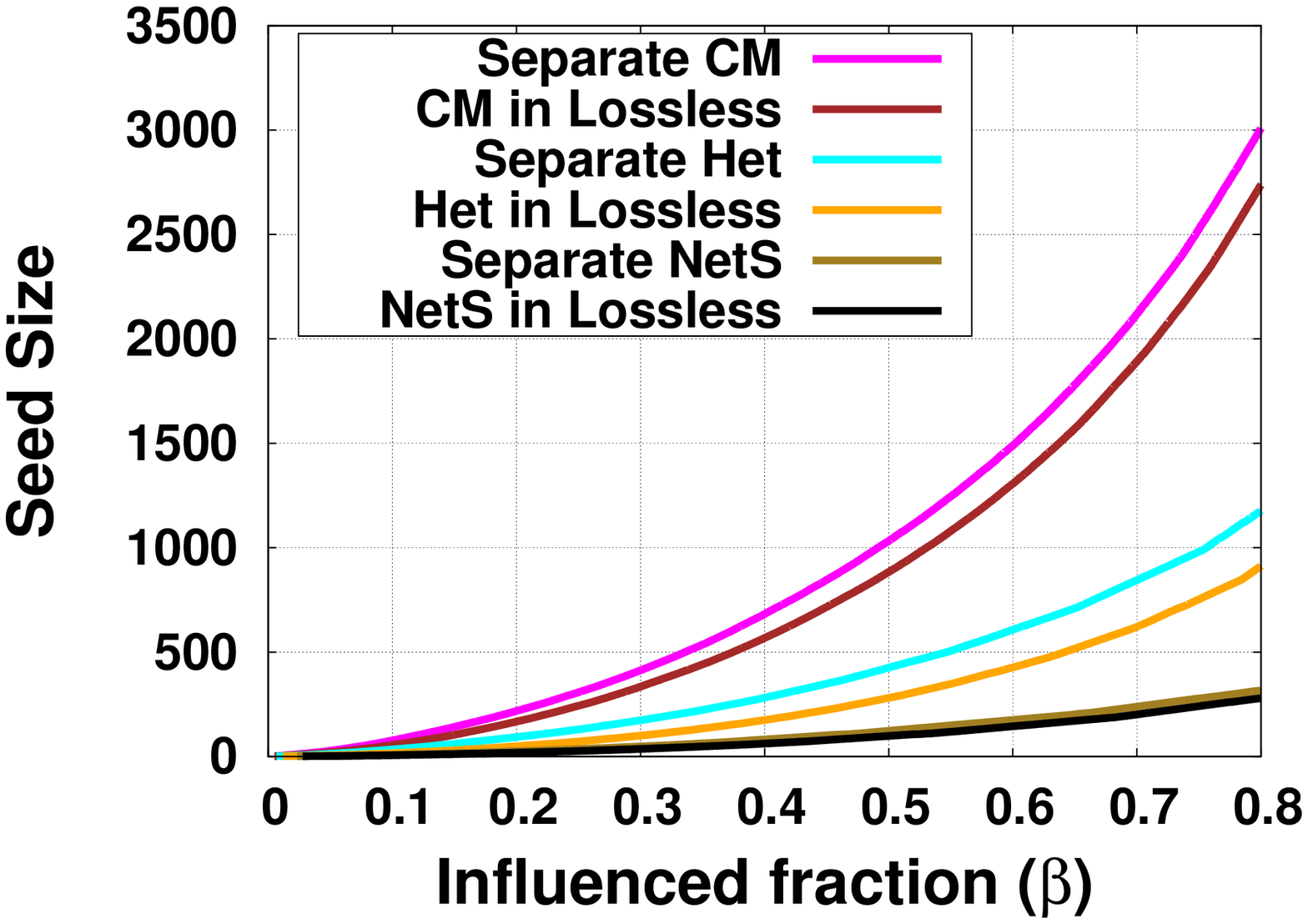}
	}
	\subfigure[FSQ and Twitter] {
	\includegraphics[width=0.4\columnwidth]{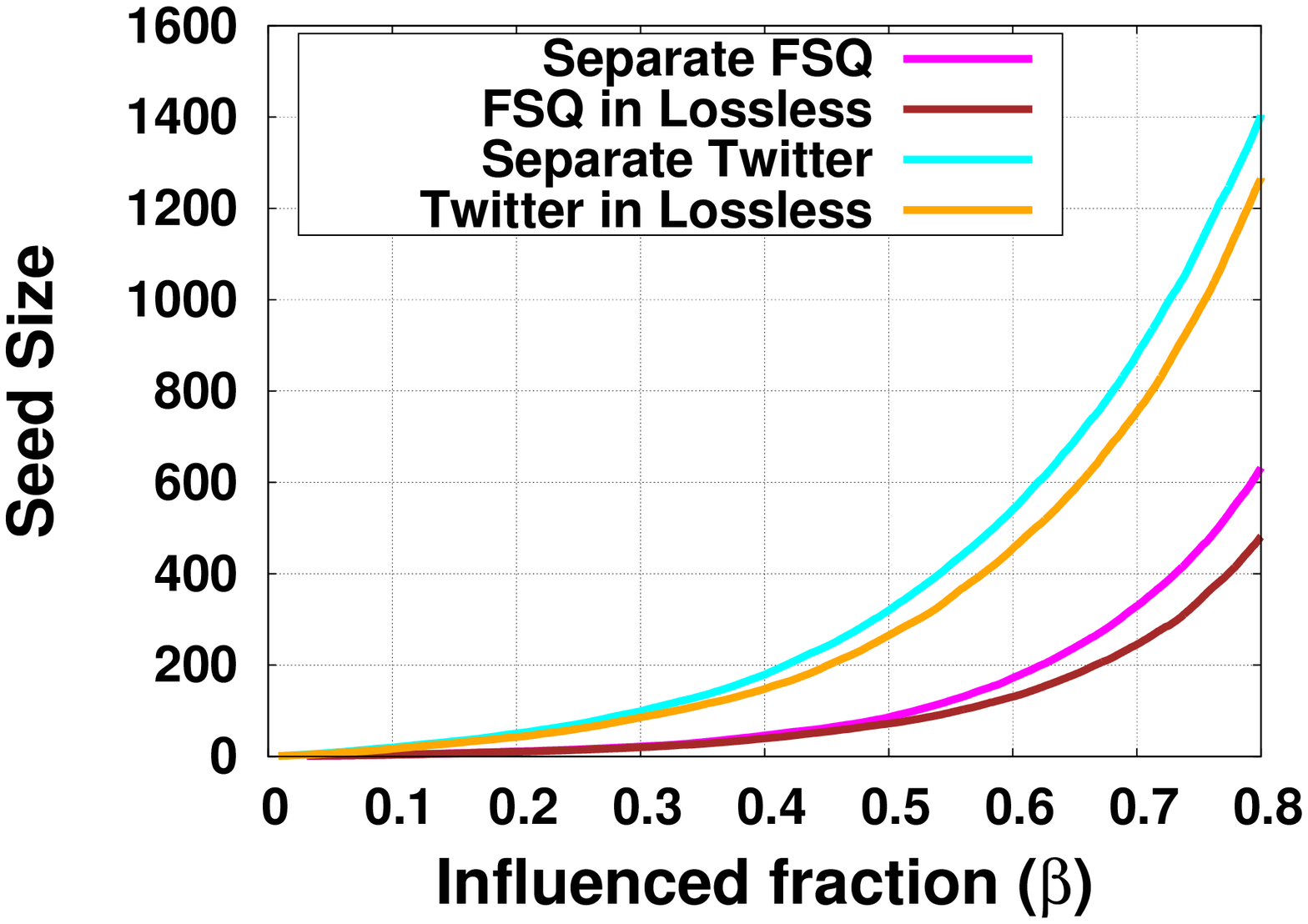}
	}
\caption{The quality of seed sets with and without using the coupled network}
\label{fig:Individual_VS_Coupled}
\end{figure}

In the second scenario, the lossless coupling scheme achieves the best result in both networks.  When the network is considered as a standalone network and choose seeds individually(labeled with Only in Fig. \ref{fig:Individual_VS_Coupled}), the seeds size is relatively larger than choosing from the coupled network. As shown in Fig. \ref{fig:Individual_VS_Coupled}), the sizes decrease by 9\%, 25\%, and 17\%  in CM, Het, and FSQ, accordingly. This improvement is also due to the information diffusion across several networks by the overlapping users. Especially, when the network sizes are unbalanced, like Het with the smaller size of users seems to get more  improvement than CM.

\subsection{Analysis of seed sets}
We analyze seed sets with different influenced fraction $\beta$ to find out: the composition of the seed set and the influenced set; and the influence contribution of each network.
As illustrated in Fig. \ref{fig:coupling_bias}, a significant fraction of the seed set is overlapping nodes although only 5\% (7\%) users of FSQ-Twitter (the co-author networks) are overlapping users. 
With $\beta = 0.4$, the fraction of overlapping seed vertices is around 24.9\% and 25\% in the co-author and FSQ-Twitter networks, respectively. 
As overlapping users can influence friends in different networks, they are more likely to be selected in the seed set than ones participating in only one network. 
Fig. \ref{fig:relay_constribution} demonstrates the high influence contribution of the overlapping users, especially when $\beta$ is small (contribute more than 50\% of the total influence when $\beta = 0.2$).
However, when $\beta$ is large, good overlapping users are already selected, so overlapping users are not favored any more. 

\begin{figure}[h]%
	\centering
	\subfigure{
	\includegraphics[width=0.4\columnwidth]{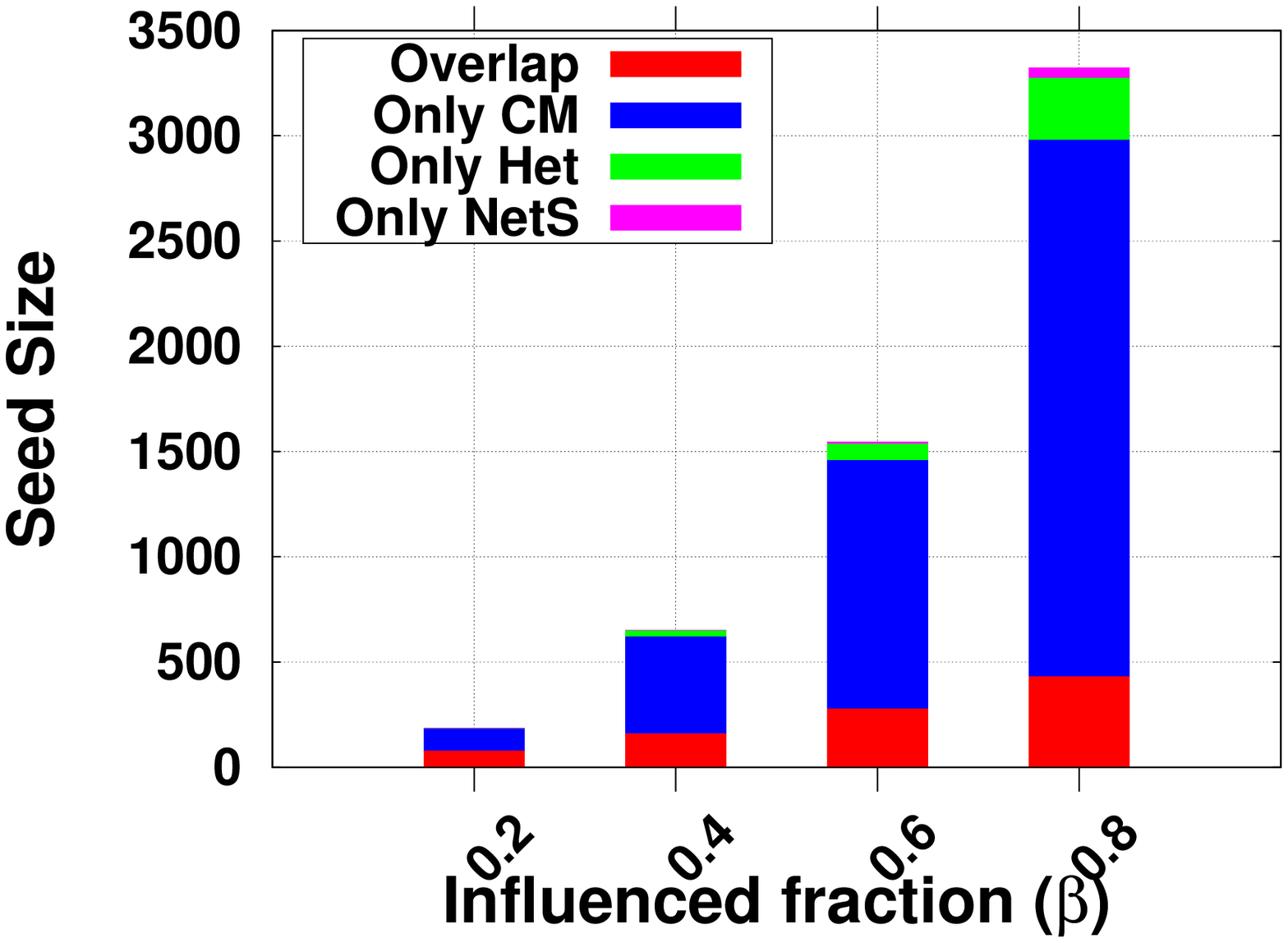}
	}	
	\subfigure {
	\includegraphics[width=0.4\columnwidth]{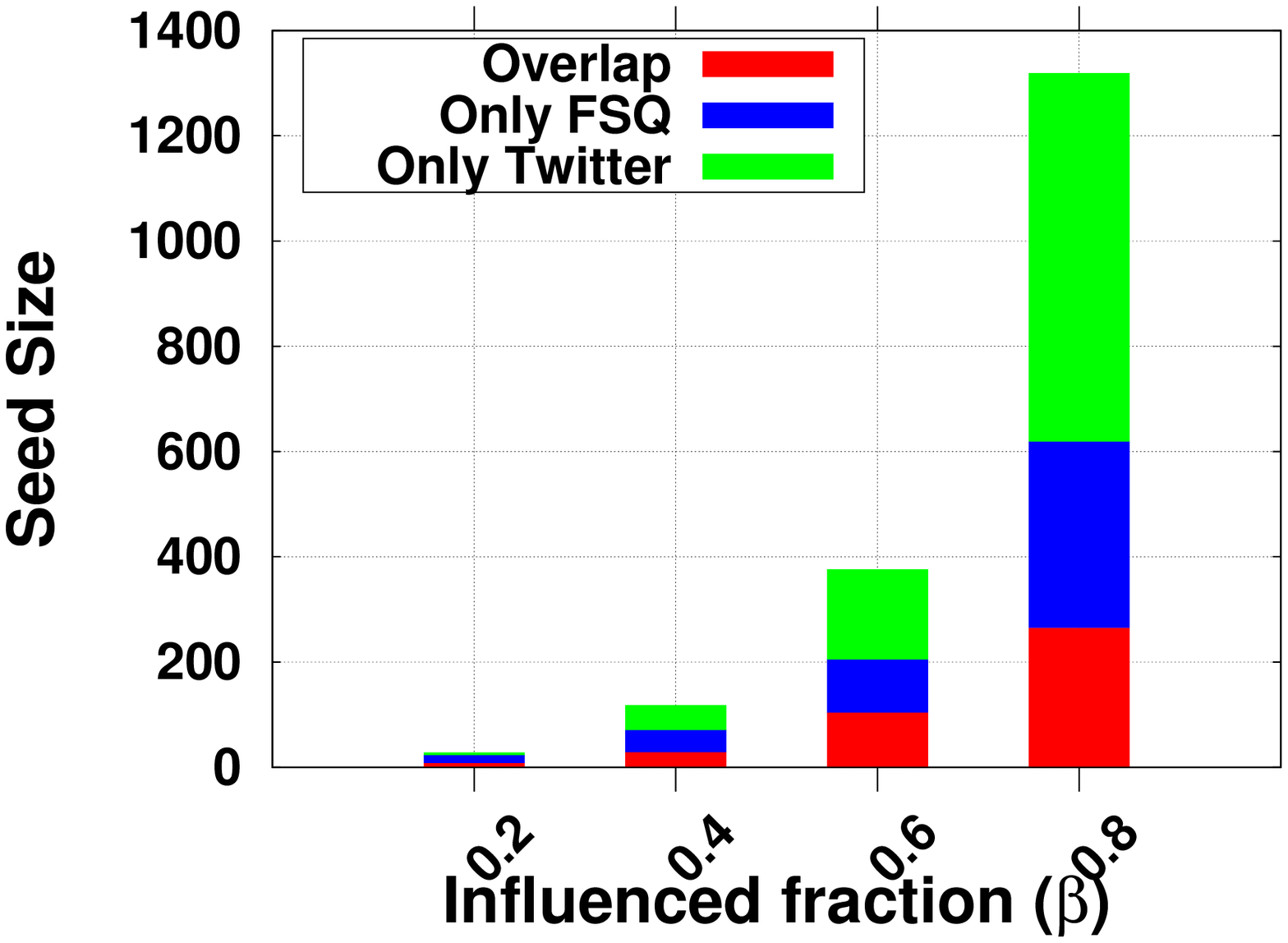}
	}
	\setcounter{subfigure}{0}
	\subfigure[Co-author networks]{
	\includegraphics[width=0.4\columnwidth]{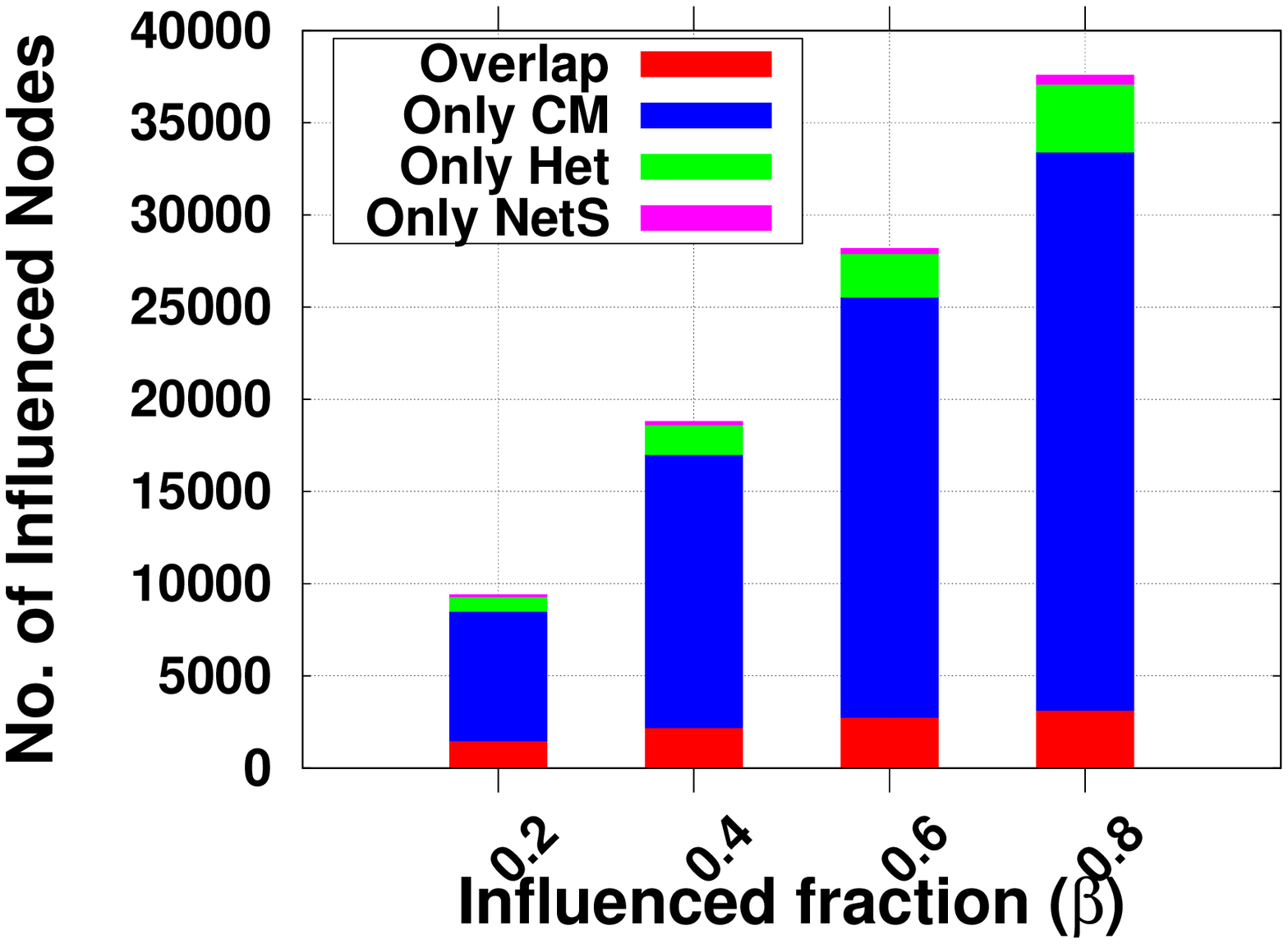}
	}
	\subfigure[FSQ-Twitter]{
	\includegraphics[width=0.4\columnwidth]{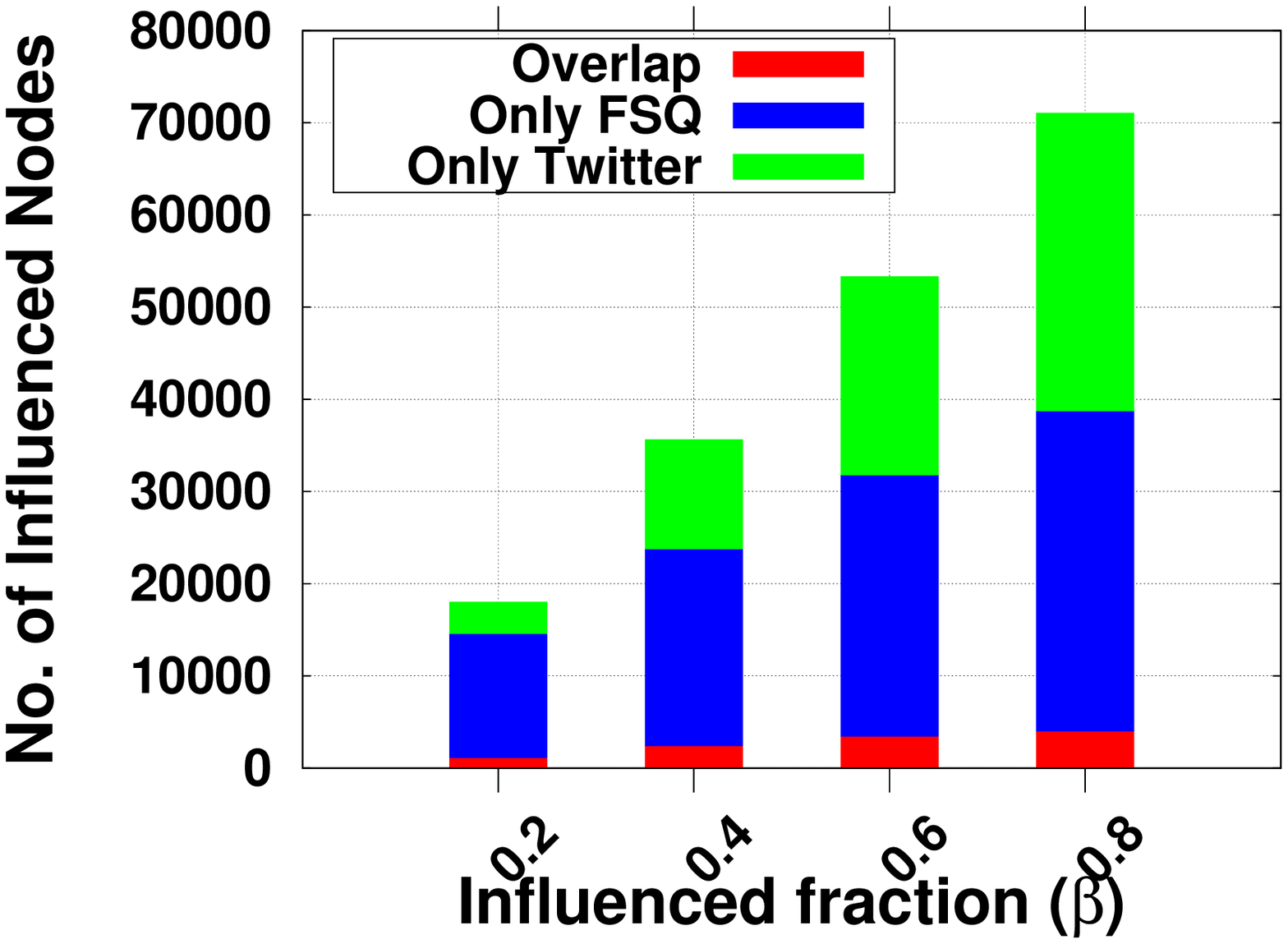}
	}
\caption{The bias in selecting seed nodes}
\label{fig:coupling_bias}
\end{figure}

Additionally, there is an imbalance between the number of selected vertices and influenced vertices in each networks.
In the co-author dataset, CM contributes a large number of seed vertices and influenced vertices since the size of CM is significantly larger than other networks. 
When $\beta = 0.8$, 76.7\% of seed vertices and 80.5\% of influenced vertices are from CM.
In contrast, the number of seed vertices from FSQ is small but the number of influenced vertices in FSQ is much higher than Twitter. 
With $\beta = 0.4$, 27\% (without overlapping vertices) of seed vertices belong to FSQ while 70\% of influenced vertices are in FSQ. 
After the major of vertices in FSQ are influenced, the algorithm starts to select more vertices in Twitter to increase the influence fraction.
This implies that it is easier for the information to propagate in one network than the other, even when we consider the overlapping between them.
Moreover, we can target the overlapping users in one network (e.g. Twitter) to influence users in another network (e.g. FSQ).

\subsection{Mutual impact of networks}
We evaluate the mutual impact between networks when the number of network $k$ increases.
We use a user base of 10000 users to synthesize networks for the experiment. For each network, we randomly select 4000 users from the user base and connect each pair of selected users randomly with probability 0.0025. Thus all networks have the same size and the expected average outgoing (incoming) degree of 10. The expected overlapping fraction of any network pair is 16\%. We measure the seed size to influence 60\% of users (6000 users) with the different number of networks (Fig. \ref{fig:impact_seedsize}). When $k$ increases from 2 to 5, the seed size decreases several times. It implies that the introduction of a new OSN increases the diffusion of information significantly.
\vspace{10pt}

\begin{figure}[h]%
  \centering
	\subfigure[Co-author networks] {
	\includegraphics[width=0.4\columnwidth]{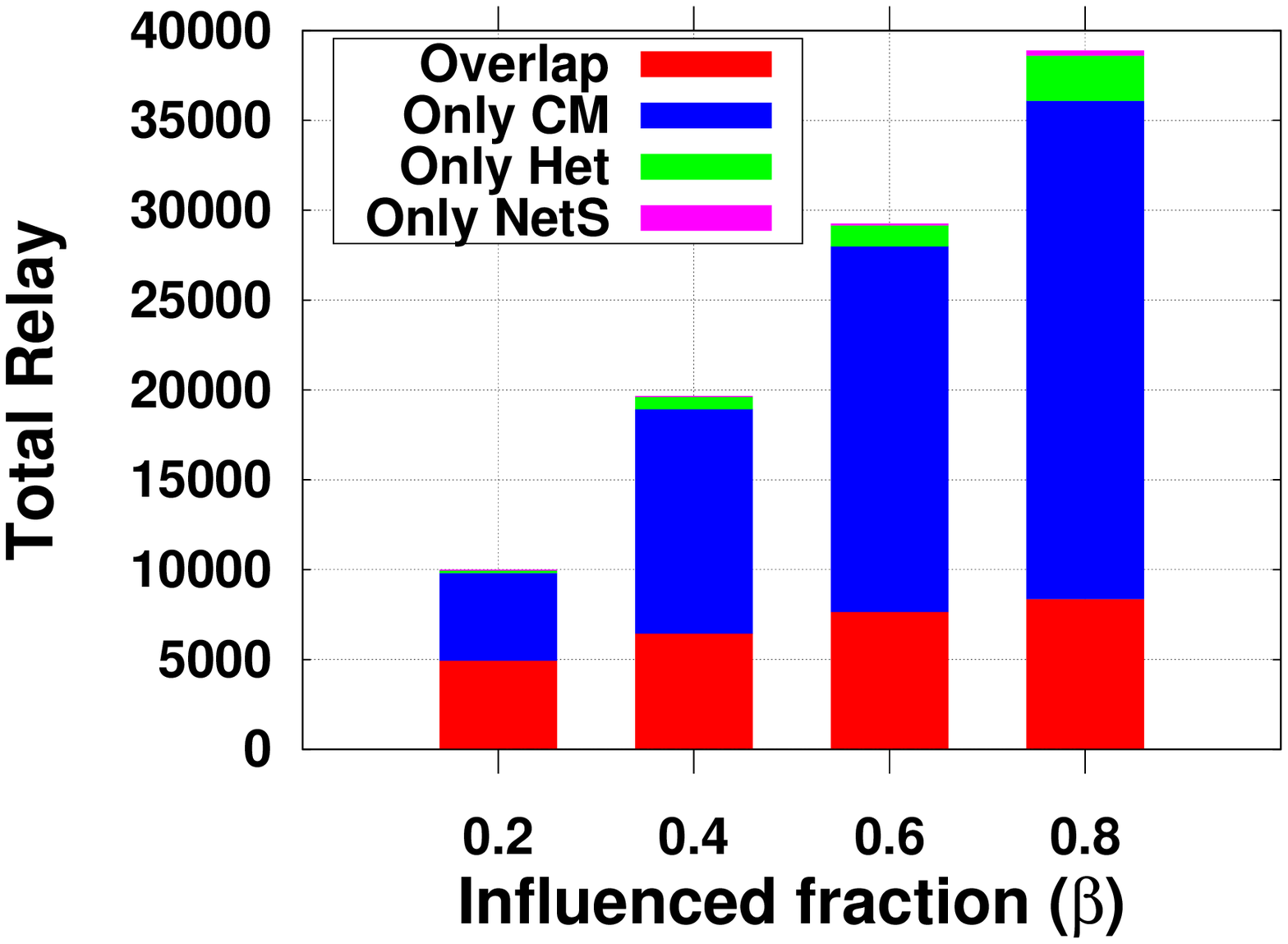}
	}
	\subfigure [FSQ and Twitter]{
	\includegraphics[width=0.4\columnwidth]{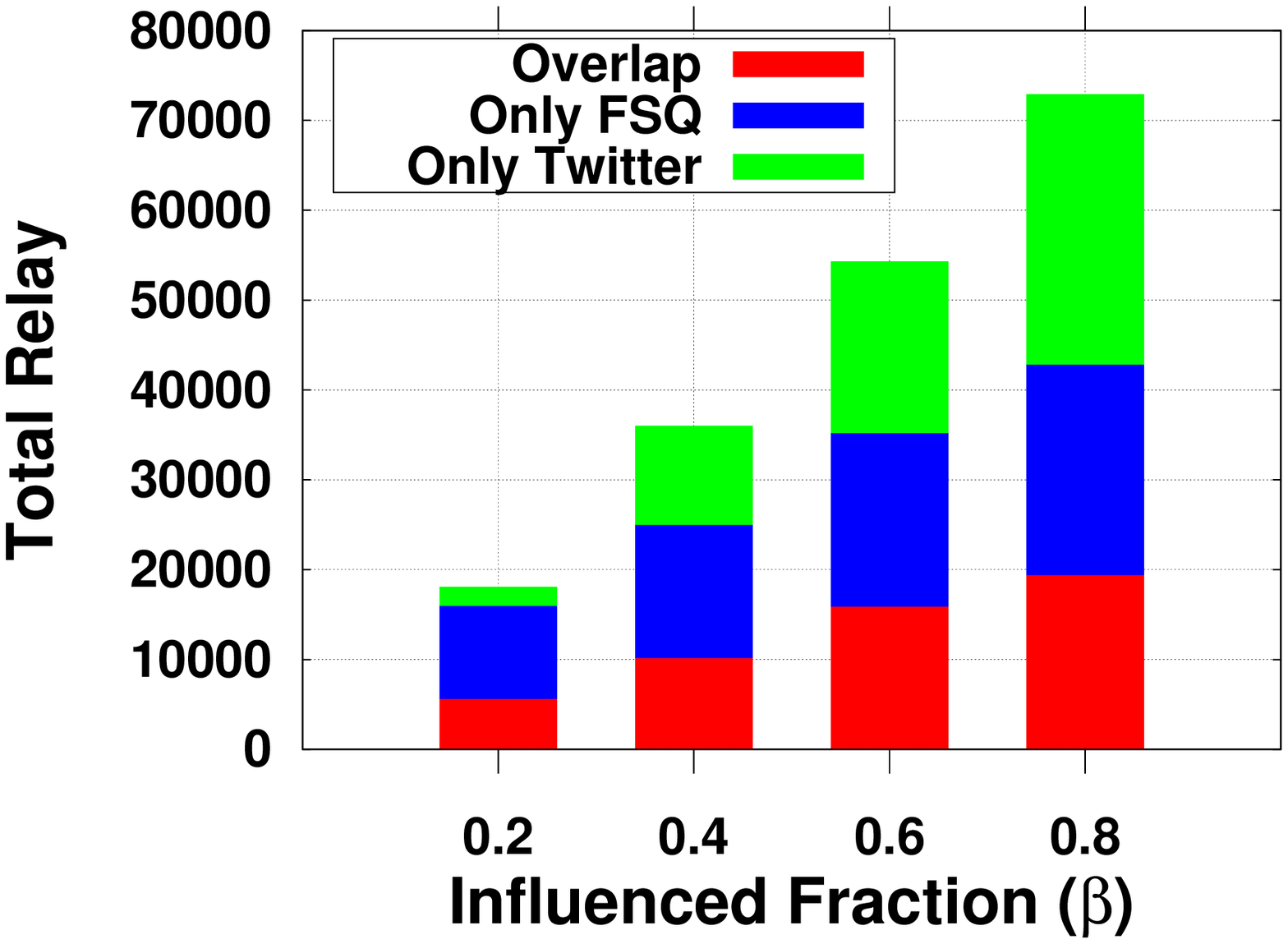}
	}	
\caption{The influence contribution of seed vertices from component networks}
\label{fig:relay_constribution}
\end{figure}

We also compute how much new networks help the existing one to propagate the information. 
Using the same seed set found by the greedy algorithm to influence 60\% (2400 users) of the target network (the first created network), we compute the total number of influenced vertices in that network as well as the external influence. Fig. \ref{fig:impact_external} shows that the number of influenced vertices is raised 46\% with the support of 3 new networks when $k$ is changed from 2 to 5. In addition, the fraction of external influence is also increased dramatically from 39\% when $k = 2$ to 67\% when $k = 5$. It means that the majority of influence can be obtained via the support of other networks. On the hand, these results suggest that the existing networks may benefit from the newly introduced competitor. 

\begin{figure}[h]
  \centering
	\subfigure[] {
  \includegraphics[width=0.4\columnwidth]{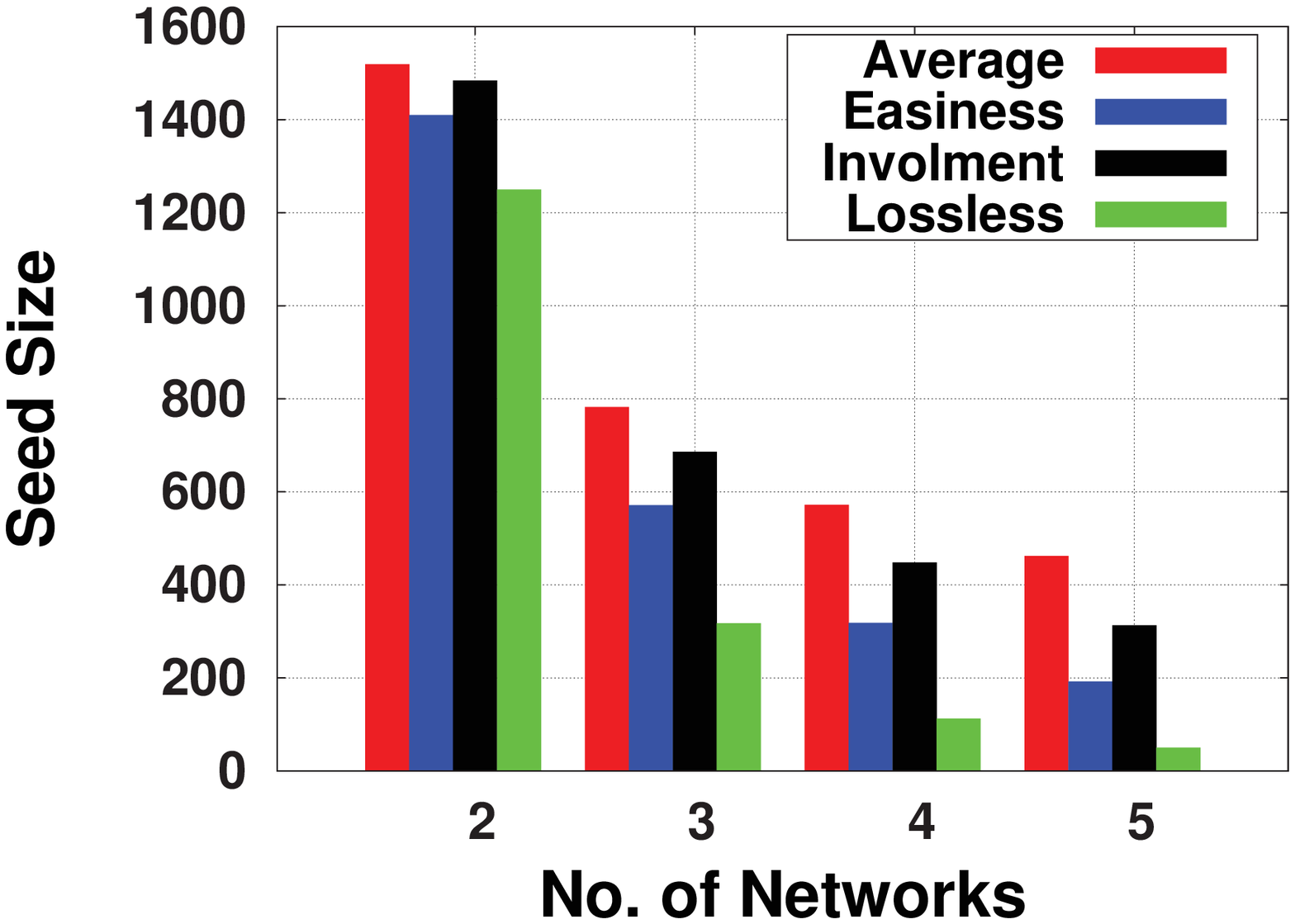}
  \label{fig:impact_seedsize}
	}
	\subfigure[]{
	\includegraphics[width=0.4\columnwidth]{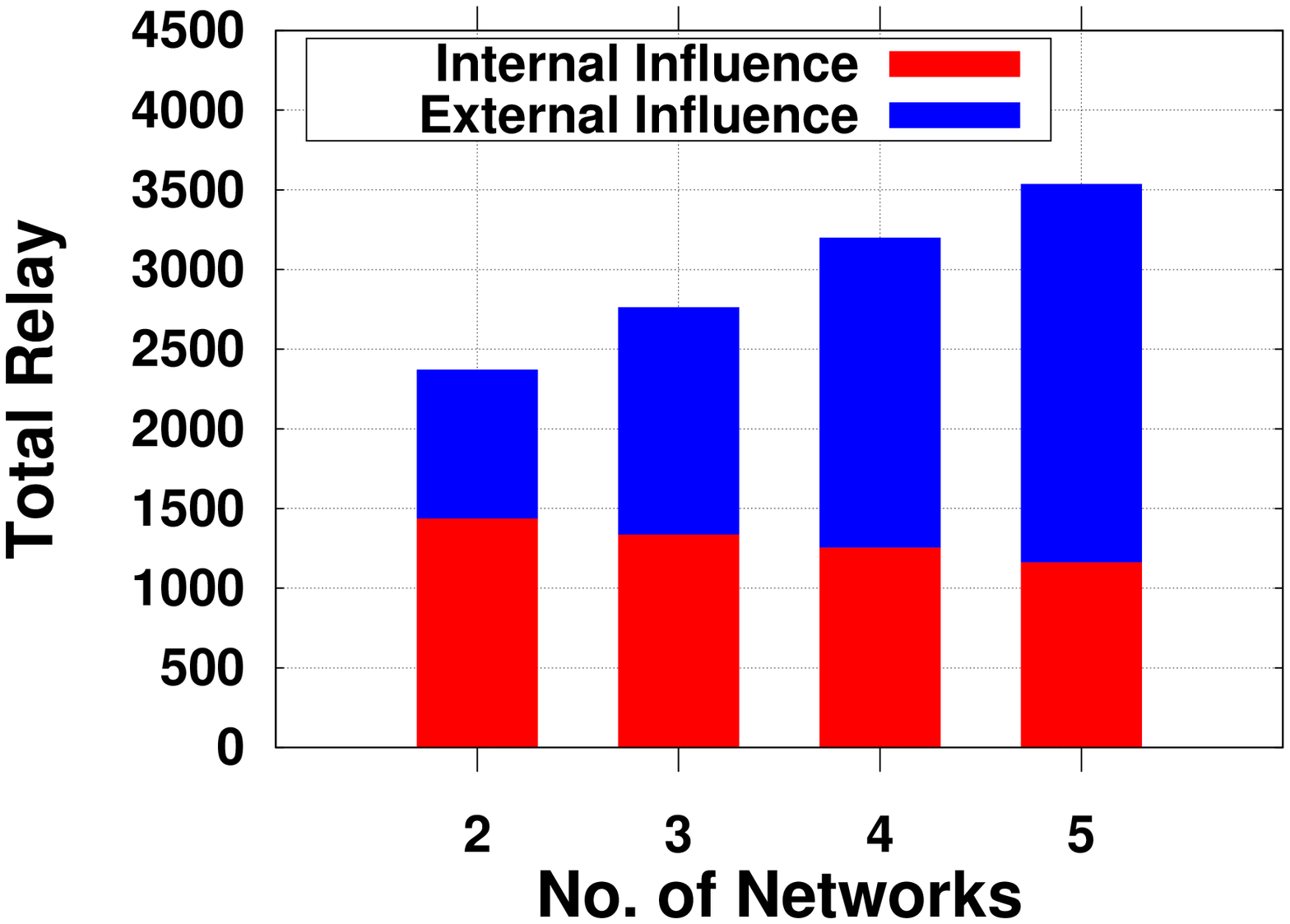}
	\label{fig:impact_external}
	}
\caption{The impact of additional networks }
\label{fig:new_networks}
\end{figure}

\section{Conclusions}
\label{se:conclusion}
In this paper, we study the least cost influence problem in multiplex networks. To tackle the problem, we introduced novel coupling schemes to reduce the problem to a version on a single network. Then we design a new metric to quantify the flow of influence inside and between networks based on the coupled network. Exhaustive experiments provide new insights to the information diffusion in multiplex networks.

In the future, we plan to investigate the problem in multiplex networks with heterogeneous diffusion models in which each network may have its own diffusion model. It is still an ongoing problem whether they can be represented efficiently, or we have better method to couple them into one network.

\section*{Acknowledgment}
This work is supported in part of NSF CCF-1422116 and DTRA HDTRA1-14-1-0055.

\bibliographystyle{plain}
\bibliography{ref.bbl}

\end{document}